\newtheorem{theorem}{Theorem}
\newtheorem{lemma}[theorem]{Lemma}
\newtheorem{proposition}[theorem]{Proposition}
\newtheorem{conjecture}[theorem]{Conjecture}
\newcommand{\comment}[1]{}
\newcommand{\spc}{{\cal S}}
\newcommand{\net}{{\cal N}}
\newcommand{\cS}{\mathcal{S}}
\newcommand{\cX}{\mathcal{X}}
\newcommand{\cG}{\mathcal{G}}
\newcommand{\negspace}{\hspace*{-2mm}}
\title{Some bounds on the capacity of communicating the sum of sources}
\author{
\authorblockN{Brijesh Kumar Rai,  Bikash Kumar Dey and Sagar Shenvi}
\authorblockA{Department of Electrical Engineering \negspace \\
Indian Institute of Technology Bombay\negspace \\
Mumbai, India, 400 076\\
\{bkrai,bikash,sagars\}@ee.iitb.ac.in}
}
\begin{document}

\maketitle

\begin{abstract}
We consider directed acyclic networks with multiple sources and multiple
terminals where each source generates one i.i.d. random process over an abelian
group and all the terminals want to recover the sum of these random processes.
The different source processes are assumed to be independent. The
solvability of such networks has been considered in some previous works.
In this paper we investigate on the capacity of such networks,
referred as {\it sum-networks},
and present some bounds in terms of min-cut, and the numbers of sources
and terminals. 

\end{abstract}

\section{Introduction}

The seminal work by Ahlswede et al. \cite{ahlswede1} started a new
regime of communication in a network where intermediate nodes
are allowed to combine incoming information to construct outgoing
symbols/packets. This has been popularly known as network coding.
It was shown that the capacity of a multicast network under network
coding is the minimum of the min-cuts of the individual terminals
from the source. The multicast
capacity under routing may be strictly less than that with coding.
The area has subsequently seen rapid developments.
Linear coding was proved to be sufficient to achieve capacity of
a multicast network in \cite{li1}. Koetter and M\'{e}dard \cite{koetter2}
proposed a different framework of random and deterministic
linear network coding, and Jaggi et. al \cite{jaggi1} proposed
a polynomial time algorithm for designing a linear network code for
a multicast network. The capacity of networks with routing and network
coding was investigated in \cite{cannons1,dougherty4,langberg2}.

In this paper, we consider a directed acyclic network with multiple
sources and terminals where the sources generate one random process each
and the terminals require the sum of those processes. We call such a
network as a {\it sum-network}. The alphabet
of the source processes is assumed to be a finite abelian group $G$, and the sum
is defined as the operation in $G$. We allow fractional
vector network coding where the number $k$ of sums communicated to
the terminals may be different from the vector dimension $l$. The
capacity is then defined naturally as the suppremum of all rates
$k/l$ which are achievable. When the alphabet is a field or more generally a module
over a commutative ring with identity, the capacity achieved
by using only linear codes over that ring
is referred as the linear coding capacity.

The problem of distributed function computation has been considered
previously in the literature in different flavors (see \cite{gallager2,giridhar,kanoria,korner1,han1,feng1}
for example). In the context of network coding, and along the same
line as our present work, communicating the sum of the sources
has been considered in several past works. Ramamoorthy (\cite{ramamoorthy})
showed that if the number of sources or the number of terminals
is not more than two, then the sum of the sources can be communicated
if and only if each source-terminal pair is connected.
On the other hand, there are networks (\cite{RaiD:09a, langberg3})
with more than two sources and
terminals where the sum can not be communicated at rate one even
though every source-terminal pair is connected.
In \cite{RaiD:09a, RaiD:09b, RaiD:09c}, the authors showed
the richness of this problem as a class by showing existence of
networks which are linearly solvable only over finite fields of characteristics
belonging to a given finite or Co-finite set of primes, existence
of network which is solvably equivalent to any (non-function)
general network coding, and thus equivalent to any given system
of polynomial equations \cite{dougherty2}. It was also shown that
by using a code construction originally given in \cite{koetter3}, any
fractional coding solution of a sum-network also naturally provides
a fractional coding solution of the same rate for the reverse network.
The case of one terminal and more general functions have been
considered in \cite{appuswamy1, appuswamy2}.

In this paper, we consider the problem of communicating the sum of
the sources over a network to a set of terminals and investigate
the capacity of such networks. The exact characterization of the
capacity seems to be difficult and we present some bounds, and find
the capacity exactly for some interesting networks with three sources
and three terminals.

The paper is organized as follows. In Section \ref{sec:model}, we formally
introduce the system model and some preliminary definition.
The results of the paper, that is, the bounds on the capacity
of sum-networks are presented in Section \ref{sec:capacity}. We end
with a discussion in Section \ref{sec:disc}.

\section{System model and definitions}\label{sec:back}
\label{sec:model}
We consider a directed acyclic multigraph $\cG= (V,E)$, where $V$ is a 
finite set of nodes and $E \subseteq V \times V$ is the set of edges 
in the network. For any edge $e=(i,j)\in E$, the node $j$ is called 
the head of the edge and the node $i$ is called the tail of the edge; 
and are denoted as $head (e)$ and $tail (e)$ respectively. For each node $v$, 
$In(v)= \{e \in E \colon head (e) = v \}$ is the set of incoming edges at the node $v$. 
Similarly, $Out(v)= \{e \in E \colon tail (e) = v \}$ is the 
set of outgoing edges from the node $v$. 

Each edge in the network is capable of carrying 
a symbol from the alphabet in each use. Each edge is used once per unit 
time and is assumed to be zero-error and zero-delay communication 
channel. A network code is an assignment of an edge 
function to each edge and a decoding function to each terminal. 
In a  $(k,l)$ fractional network code, $k$ symbols generated at each source 
are blocked and encoded into $l$-length vectors on the outgoing edges.
All the internal edges also carry $l$-length vectors.
Thus
%
%
for a $(k,l)$ fractional network code over $G$, an edge function for
an edge $e$, with $tail (e) = v$, is defined as
\begin{eqnarray}
f_{e} \colon G^{k}\rightarrow G^{l}, \mbox{ if } v \in S \label{code1}
\end{eqnarray}
and
\begin{eqnarray}
f_{e} \colon G^{l|In(v)|}\rightarrow G^{l}, \mbox{ if } v \notin S. \label{code2}
\end{eqnarray}

A decoding function for a terminal $v$ is defined as 
\begin{eqnarray}
g_{v} \colon G^{l|In(v)|}\rightarrow G^{k}. \label{code3}
\end{eqnarray}
The goal in a
sum-network is that the terminals should be able to recover the sum of
the $k$-length vectors generated at the sources.
A $(k,l)$ fractional network code over $G$ is called a 
$k$-length or a $k$-dimension {\it vector network code} over $G$ 
if $k=l$ and called a {\it scalar network code} over $G$ if $k=l=1$. 
A network code is called a {\it linear network code} when the alphabet
is a field, or more generally a module over a commutative ring with identity,
and all the edge functions and the decoding functions are
linear functions over the alphabet field or the ring.
Note that even if the alphabet is an abelian group, one can talk about
a linear solution by considering the abelian group as a module over
the integer ring and then a code can be linear over the integer ring.

A network has a solution over $G$ using a $(k,l)$ fractional network code over $G$ 
if the demand of each terminal node is fulfilled using some $(k,l)$ fractional network 
code over $G$. 
The ratio $k/l$ is the {\it rate} of the $(k,l)$ fractional network code.
A rate $k/l$ is said to be {\it achievable} if there is a $(k,l)$
fractional solution for the network. The suppremum of all achievable
rates is defined to be the {\it capacity}.
The {\it linear network coding capacity} of a network is the suppremum of
all rates that are achievable using fractional linear network codes. 
Clearly, the network coding capacity of a network 
is greater than or equal to the linear network coding capacity.
A sum-network is said to be solvable (resp. linearly solvable) if it has a $(1,1)$
coding (resp. linear coding) solution.

\section{Capacity of sum-networks}
\label{sec:capacity}
First, we mention the following simple lower bound on the capacity of any
sum-network.
\begin{theorem}
The capacity of a sum-network is bounded by the minimum of the min-cuts
of all source-terminal pairs. That is,
\begin{eqnarray}
Capacity & \leq &  min_{i,j}(\mbox{\emph{min-cut}} \ (s_i-t_j)). \nonumber 
\end{eqnarray}
\end{theorem}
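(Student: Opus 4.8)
The plan is to reduce the multi-terminal sum-communication problem to a standard single-source single-terminal cut argument for one carefully chosen pair $(s_i,t_j)$. Fix a rate $k/l$ that is achievable, so there exists a $(k,l)$ fractional network code over $G$ under which every terminal recovers the sum of the $k$-length source vectors. Fix any source-terminal pair $(s_i,t_j)$, let $\Cut$ be a cut separating $s_i$ from $t_j$, and let $E_\Cut$ denote the set of edges crossing the cut. The first step is the observation that all the information $t_j$ ever receives about the source process at $s_i$ must flow across $E_\Cut$; in a zero-error, zero-delay acyclic network every signal reaching $t_j$ is a deterministic function of the symbols carried on the edges in $E_\Cut$, together with the source processes generated on the $s_i$ side of the cut other than $s_i$ itself and the source processes on the $t_j$ side.

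The key step is an independence-and-counting argument. Because $t_j$ must output the exact sum $\sum_r X_r$ (where $X_r \in G^k$ is the process at $s_r$), and because the sources are mutually independent, $t_j$ must in particular be able to distinguish every value of the partial sum that depends on $X_i$. Concretely, I would freeze the realizations of all sources except $s_i$ at arbitrary fixed values; then the required decoder output is $X_i$ plus a constant, so as $X_i$ ranges over all $|G|^k$ values the decoder must produce $|G|^k$ distinct outputs. All of these distinct outputs are deterministic functions of only the $l|E_\Cut|$ symbols carried on the cut edges (the other sources being frozen). Hence the map from $X_i$ to the cut-edge vector must be injective, forcing
\begin{eqnarray}
|G|^{\,l\,|E_\Cut|} & \geq & |G|^{\,k}. \nonumber
\end{eqnarray}
Taking logarithms gives $|E_\Cut| \geq k/l$, and since this holds for every cut separating $s_i$ from $t_j$, minimizing over cuts yields $\mbox{min-cut}(s_i - t_j) \geq k/l$.

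Finally, since the inequality $\mbox{min-cut}(s_i - t_j) \geq k/l$ holds for the arbitrarily chosen pair $(i,j)$, it holds for the pair achieving the minimum, so $\min_{i,j}\,\mbox{min-cut}(s_i - t_j) \geq k/l$. Taking the supremum over all achievable rates $k/l$ gives $Capacity \leq \min_{i,j}\,\mbox{min-cut}(s_i - t_j)$, as claimed.

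I expect the main obstacle to be making the ``freeze all other sources'' reduction fully rigorous: one must argue carefully that conditioning on fixed values of the remaining sources leaves the decoder a genuine function of $X_i$ and the cut-edge symbols alone, and that recovering the full sum indeed forces recovery of $X_i$ up to an additive constant under this conditioning. This is where the independence of the sources and the group structure (so that $X_i \mapsto X_i + \text{const}$ is a bijection on $G^k$) enter essentially; the cut-counting inequality itself is then a routine consequence.
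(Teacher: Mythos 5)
Your proof is correct and follows essentially the same route as the paper's: both arguments freeze all sources other than $s_i$ at constant values, observe that each terminal must then recover $X_i$ up to an additive constant (hence injectively), and conclude via a cut bound for the pair $(s_i,t_j)$. The only cosmetic difference is that the paper invokes the known multicast capacity theorem at this point, whereas you carry out the standard cut-set counting argument explicitly, which makes the proof more self-contained but does not change the underlying idea.
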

\begin{proof}
For any source $s_i$ of the network, let us fix the source processes
of the other sources to the all-zero ("zero" being the identity
element of the alphabet group) sequence. Then the problem reduces to
the multicast problem from the source $s_i$ to all the terminals, and
the capacity of this problem is the minimum of the min-cuts from $S_i$
to all the terminals. The overall capacity of the sum-network must be
less than or equal to each of these multicast capacities for
different $i$.
\end{proof}
In \cite{RaiD:09b, RaiD:09c}, the reverse of a sum-network was considered
where the direction of the edges are reversed and the role of sources and
terminals is interchanged. It was shown, using a code-construction originally
described in a basic form in \cite{koetter3} as the {\it dual code} in the language of
codes on graphs, that if a sum-network has a $(k,n)$ fractional
linear solution, then from such a network code, one can also construct
a $(k,n)$ fractional linear solution of the reverse sum-network.
This means that the linear coding capacity of the reverse sum-network is
the same as the linear coding capacity of the original sum-network.

Our lower bounds on the capacity of sum-networks have varying degree of
tightness depending on the number of sources and the number of terminals
of the network. So we present these bounds in different subsections
dealing with various numbers of sources and terminals.
For the rest of the paper, $m$ and $n$ will denote the number of sources
and the number of terminals respectively.

\subsection{The case of $\min \{m,n\}=1$}
If the sum-network has only one source, then the network is a multicast
network. The capacity of a multicast network is known to be {\it equal}
to the minimum of the min-cuts of the source-terminal pairs, and thus
the capacity achieves the min-cut upper bound. Moreover, this capacity
is achieved by linear codes if alphabet is a finite field.
Now, over a finite field, for the case of $n=1$, let us consider the reverse network of
a sum-network obtained by reversing the direction of the edges
and interchanging the role of the sources and the terminals.
The reverse network is a multicast network and thus has linear coding
capacity equal to the minimum of the min-cuts of the source-terminal
pairs. So the linear coding capacity of the original one-terminal
sum-network is the minimum of the min-cuts of the source-terminal pairs.
Since the coding capacity is also upper bounded by the min-cut, the
coding capacity of a one-terminal sum-network is the minimum of the
min-cuts of the source-terminal pairs. So we have

\begin{theorem}
The capacity (and the linear coding capacity) of a one-source or
one-terminal sum-network is the minimum of the min-cuts of all
source-terminal pairs.
\end{theorem}

\subsection{The case of $\min \{m,n\}=2$}
\label{subsec:min2}
It was proved in \cite{ramamoorthy} that for a network with $\min \{m,n\}=2$
where every source-terminal pair is connected, it is possible to communicate
the sum of the sources to the terminals (at rate $1$). Which means that
for $\min \{m,n\}=2$, $Capacity \geq 1$ if the minimum of the min-cuts
of the source-terminal pairs is at least $1$. So, the min-cut bound is
tight in this case if the min-cut is $1$. However, if the min-cut is greater
than $1$, then it is not known if this upper bound is achievable.
However, we can always achieve the half of the min-cut
upper bound by time-sharing. For example, for $m=2$, each source can communicate
its symbols in one slot at the rate of min-cut and then after two
time-slots, the terminals can add the symbols received from the two
sources. So, we have

\begin{theorem}
For $\min \{m,n\} = 2$, the capacity of a sum-network is bounded as
\begin{eqnarray}
Capacity & \geq & \max \left\{ \min \{1, min_{i,j}(\mbox{\emph{min-cut}} \ (s_i-t_j))\}, \right. \nonumber \\
&& \hspace{9mm} \left. 0.5\times min_{i,j}(\mbox{\emph{min-cut}} \ (s_i-t_j))\right\}.  \nonumber
\end{eqnarray}
\end{theorem}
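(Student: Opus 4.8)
The plan is to prove the two lower bounds inside the maximum separately and then take their larger value. Write $c=\min_{i,j}(\mbox{min-cut}\,(s_i-t_j))$ and recall that in this model every min-cut is a non-negative integer, so $\min\{1,c\}$ equals $0$ when some source-terminal pair is disconnected and $1$ otherwise. For the first bound $Capacity\geq\min\{1,c\}$, the case $c=0$ is vacuous, so assume $c\geq 1$. Then every source-terminal pair is connected by a directed path, and since $\min\{m,n\}=2$ the result of Ramamoorthy \cite{ramamoorthy} quoted above guarantees that the sum is communicable at rate one; hence $Capacity\geq 1=\min\{1,c\}$.

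For the second bound $Capacity\geq 0.5\,c$ I would use time-sharing over a block of $l$ edge uses (with $l$ even, so that $k=cl/2$ is an integer), splitting into the two sub-cases permitted by $\min\{m,n\}=2$. If $m=2$, allocate the first $l/2$ uses to $s_1$ and the remaining $l/2$ to $s_2$; while $s_1$ transmits, the other source is silent, so the active network is a one-source (multicast) sum-network whose min-cut to every terminal is at least $c$, and by the preceding one-source theorem it can deliver $k=cl/2$ symbols to all terminals in $l/2$ uses. Doing the same for $s_2$ in the second half lets each terminal add the two received $k$-vectors and recover the sum at rate $k/l=c/2$.

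If instead $n=2$, the natural move is to time-share the two \emph{terminals} rather than the sources: in the first $l/2$ uses run a code that delivers the sum to $t_1$ alone, and in the second $l/2$ uses a code that delivers it to $t_2$. Restricting attention to a single terminal turns the network into a one-terminal sum-network whose min-cut from every source is at least $c$, so by the preceding one-terminal theorem each phase achieves rate $c$; the sources simply reuse their stored $k$-length vectors in both phases, and each terminal recovers the full sum at rate $c/2$. Combining the two bounds and passing to the supremum of achievable rates yields $Capacity\geq\max\{\min\{1,c\},\,0.5\,c\}$. I expect the $n=2$ sub-case to be the main obstacle: one cannot time-share the $m$ sources without losing a factor $m$, and the resolution — serving each of the two terminals independently through the one-terminal result, with the sources reusing the same symbols — is the step that must be argued carefully to confirm that both terminals genuinely recover the same sum.
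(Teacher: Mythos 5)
Your proof is correct, and it diverges from the paper in an instructive way on the $n=2$ sub-case. The first bound ($Capacity \geq \min\{1,c\}$ via Ramamoorthy) and the $m=2$ half of the second bound (time-share the two sources, each multicasting at rate equal to the min-cut, terminals add) are exactly the paper's argument. But the paper's visible proof only spells out the $m=2$ example and leaves the $n=2,\ m>2$ case to be inferred from its earlier remark that a fractional linear code for a sum-network yields one of the same rate for the reverse network: the reverse of an $(m,2)$ sum-network is a $(2,m)$ sum-network, to which the source-time-sharing argument applies. You instead handle $n=2$ directly by time-sharing the two \emph{terminals} and invoking the one-terminal capacity theorem (capacity equals the minimum of the min-cuts) in each phase, with the sources reusing the same $k$-length messages. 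Both routes are valid; yours is more self-contained at the point of application (no appeal to the reversibility construction, which in the paper is stated only for linear codes), while the paper's is shorter given that the reversibility lemma has already been set up. You are also right to flag the $n=2$ case as the one needing care --- naively time-sharing the $m$ sources there would only give $c/m$ --- and your resolution is sound: in each phase the inactive terminal simply ignores its received symbols, and both phases encode the same source blocks, so both terminals recover the identical sum at overall rate $c/2$.
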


\subsection{The case of $m=n=3$}
The case of $m=n=3$ is intriguing. On one hand, these are the smallest
values of $m,n$ for which there is a network (called $\cS_3$ in
\cite{RaiD:09a} and shown in Fig. \ref{fig:S3}) where every source-terminal
pair is connected, i.e., which has min-cut $\geq$ 1, but still does not
have a linear \cite{RaiD:09a} or non-linear \cite{langberg3} solution
of rate $1$. So, these are the smallest parameters for which the min-cut
upper bound is known to be not achievable. (Though it is still not clear
at this point if the min-cut upper bound may still be achievable in the
limit as the suppremum of achievable rates.) On the other hand, from elaborate
investigation of possible networks with these parameters, there seems
to be very limited types of networks. The $S_3$ and its extensions
(essentially the network shown in Fig. \ref{fig:S3a})
seem to be the only "non-solvable" sum-networks for $m=n=3$. The network
(let us call it $\cX_3$) shown in Fig. \ref{fig:X3} was presented in \cite{RaiD:09b}
and was shown to be solvable by scalar linear code over all fields except
the binary field $F_2$.

\vspace*{-4mm}
\begin{center}
\begin{figure}[h]
\centering\includegraphics[width=2.0in]{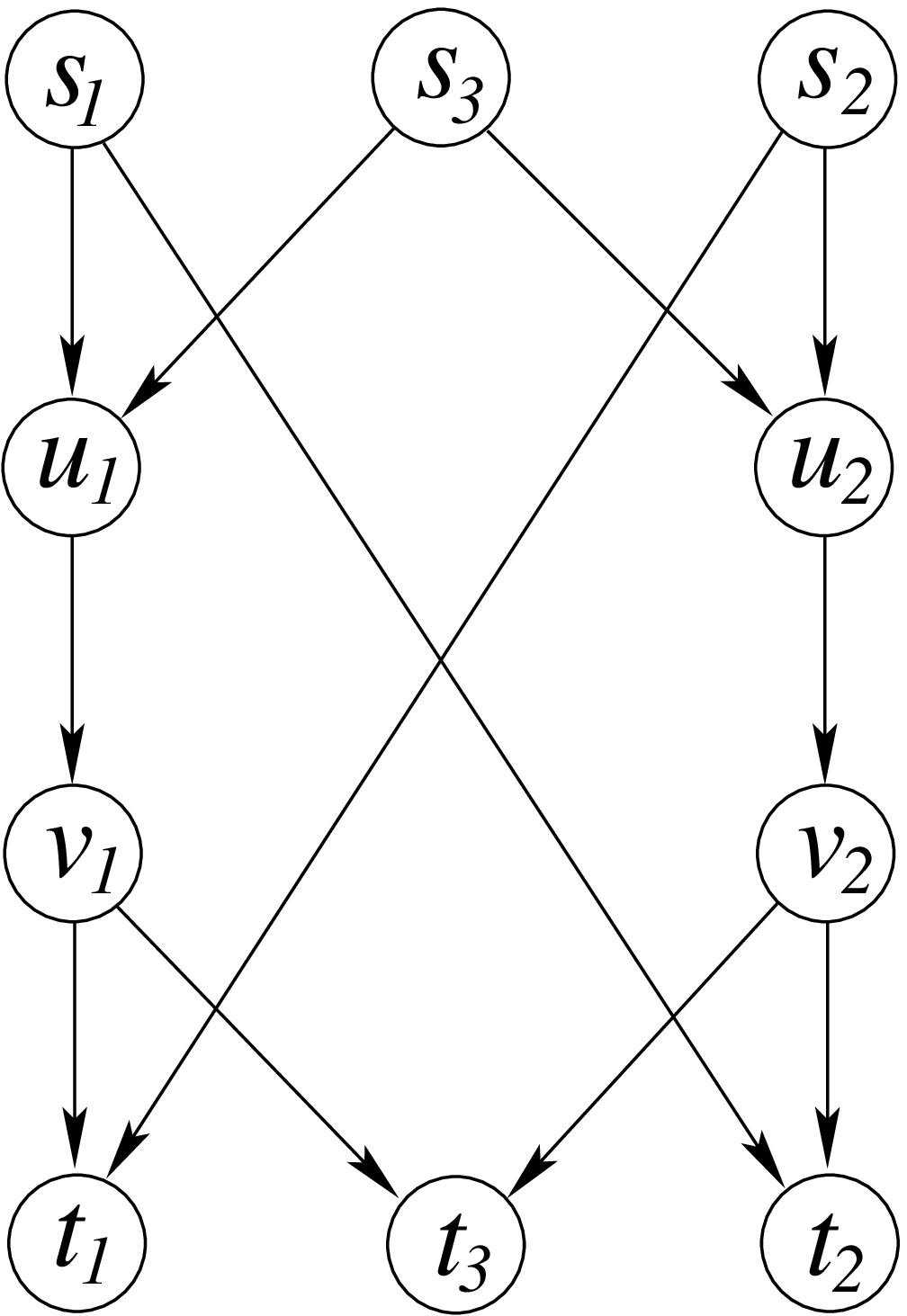} \caption{The
network $\cS_3$} \label{fig:S3}
\end{figure}
\end{center}

\vspace*{-4mm}
\begin{center}
\begin{figure}[h]
\centering\includegraphics[width=2.0in]{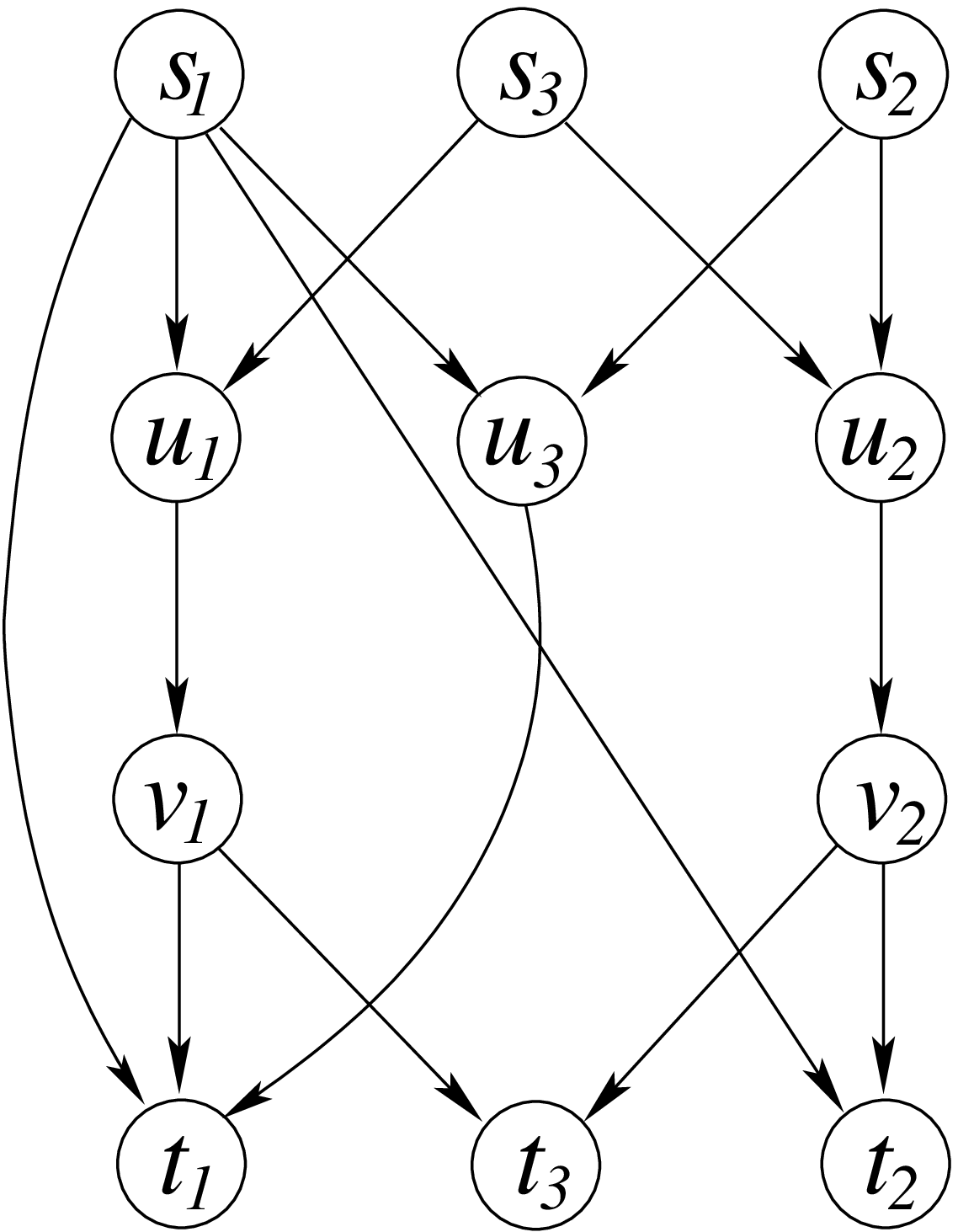} \caption{The
network $\cS_3^\prime$} \label{fig:S3a}
\end{figure}
\end{center}

\vspace*{-4mm}
\begin{center}
\begin{figure}[h]
\centering\includegraphics[width=2.0in]{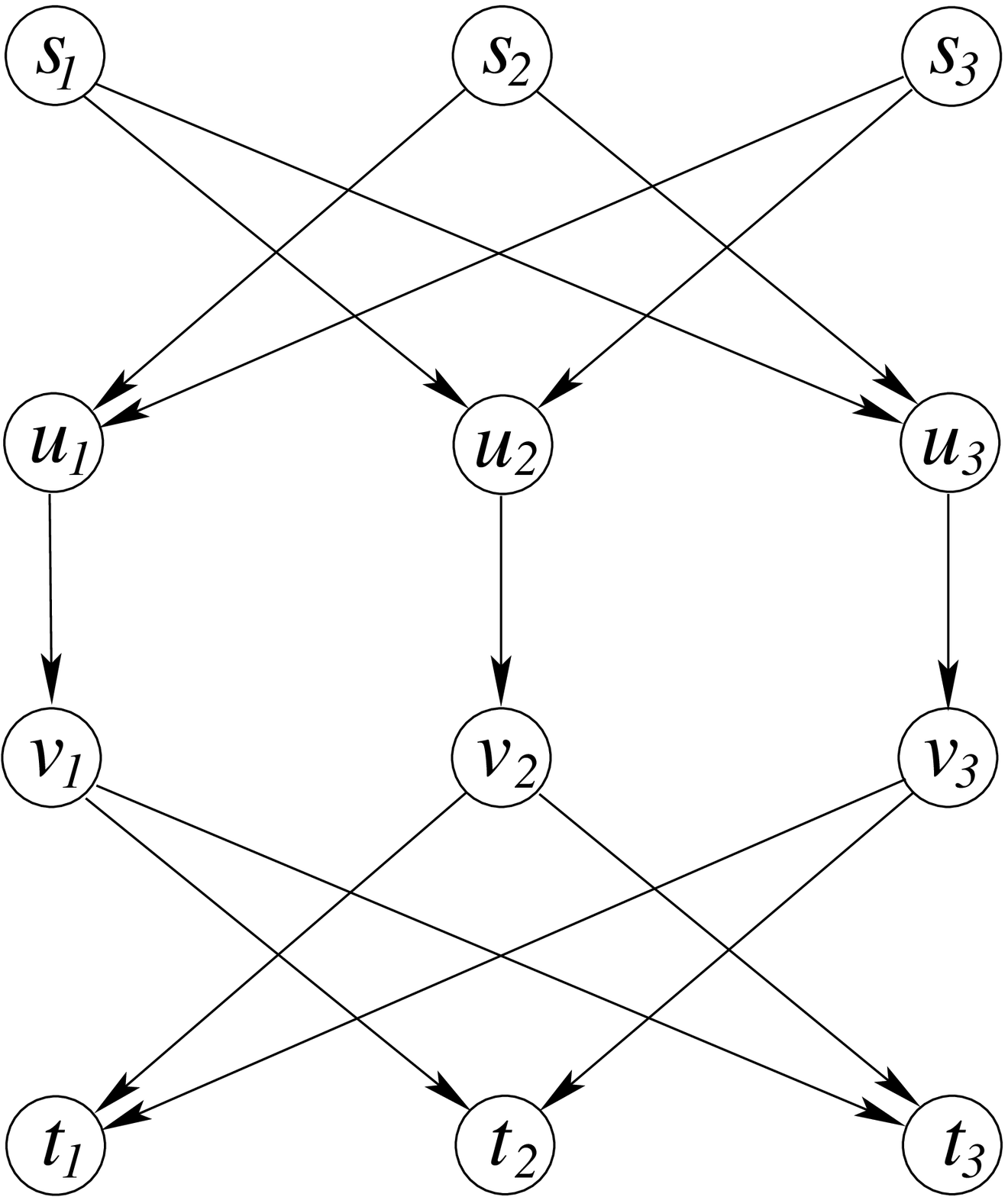} \caption{The
network $\cX_3$} \label{fig:X3}
\end{figure}
\end{center}

First, we give a generic lower bound on the capacity of any sum-network
with $\min \{m,n\} = 3$ with min-cut $\geq 1$.

\begin{theorem}
\label{th:23achivability}
The linear coding capacity of any sum-network with $\min \{m,n\} = 3$ with
min-cut $\geq 1$ is at least $2/3$.
\end{theorem}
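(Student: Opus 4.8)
The plan is to exhibit a $(2,3)$ fractional linear solution, which gives rate $2/3$. By the invariance of the linear coding capacity and of all source--terminal min-cuts under edge reversal (\cite{RaiD:09b, RaiD:09c}), I may assume without loss of generality that the network has exactly three sources $s_1,s_2,s_3$ and $n\ge 3$ terminals: if instead $n=3<m$, I replace the network by its reverse, which has three sources, $m$ terminals, and min-cut $\ge 1$ on every pair. So fix three sources, each generating a length-$2$ block $(a_i,b_i)\in G^2$, let every edge carry a length-$3$ vector, and require every terminal to recover $A=a_1+a_2+a_3$ and $B=b_1+b_2+b_3$.

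The key idea is to split the three coordinates into three \emph{phases}, and in phase $j$ run a scalar linear rate-one code for the two-source sum-network obtained by keeping only the pair $\{s_1,s_2,s_3\}\setminus\{s_j\}$ active (the third source injecting $0$ in that coordinate) and retaining all $n$ terminals. Each such two-source network has min-cut $\ge 1$ on every source--terminal pair, so by the $\min\{m,n\}=2$ result of Section \ref{subsec:min2} (\cite{ramamoorthy}) it admits a scalar linear code, over any sufficiently large field, delivering the sum of the two active sources' injected symbols to \emph{all} terminals. Choosing one common field $\mathbb{F}_q$ large enough for all three pairs and running these three codes in the three coordinates, every terminal obtains the three pairwise sums $S_1,S_2,S_3$, where $S_j$ is the sum of the phase-$j$ contributions of the two sources active in phase $j$. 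Each source is active in exactly two phases, and I let source $i$ map its block $(a_i,b_i)$ to its two injected symbols by a fixed invertible $2\times 2$ matrix $M_i$ over $\mathbb{F}_q$.

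It remains to choose the $M_i$ and the terminals' decoders so that both $A$ and $B$ are recoverable from $(S_1,S_2,S_3)$. I look for global row vectors $\boldsymbol\lambda=(\lambda_1,\lambda_2,\lambda_3)$ and $\boldsymbol\mu=(\mu_1,\mu_2,\mu_3)$ and set the decoders to $A=\sum_j\lambda_j S_j$ and $B=\sum_j\mu_j S_j$. Writing out the coefficient of each source's block, the requirement that the source-$i$ contribution to $\sum_j\lambda_jS_j$ be exactly $a_i$ and to $\sum_j\mu_jS_j$ be exactly $b_i$ becomes, for each $i$,
\[
\begin{pmatrix} \lambda_{j_1} & \lambda_{j_2} \\ \mu_{j_1} & \mu_{j_2} \end{pmatrix} M_i = I,
\]
where $j_1,j_2$ are the two phases in which source $i$ is active. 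Thus I only need $\boldsymbol\lambda,\boldsymbol\mu$ such that each of the three $2\times 2$ submatrices of $\begin{pmatrix}\boldsymbol\lambda\\ \boldsymbol\mu\end{pmatrix}$, with columns indexed by the active phases $\{2,3\},\{1,3\},\{1,2\}$ of the three sources, is invertible, and then take $M_i$ to be the corresponding inverse. Equivalently, the $2\times 3$ matrix $\begin{pmatrix}\boldsymbol\lambda\\ \boldsymbol\mu\end{pmatrix}$ must have all three $2\times 2$ minors nonzero, i.e. pairwise independent columns; such a matrix exists over every field (over $\mathbb{F}_2$ take columns $(1,0),(0,1),(1,1)$), hence over $\mathbb{F}_q$. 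A direct check then confirms $\sum_j\lambda_jS_j=A$ and $\sum_j\mu_jS_j=B$ at every terminal, giving the rate-$2/3$ linear solution.

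The step I expect to be the crux is precisely this last one: the three phases yield only the three \emph{pairwise} sums $S_1,S_2,S_3$, yet every terminal must extract \emph{both} full component-sums $A$ and $B$, and reconciling this forces the simultaneous invertibility of all three $2\times 2$ submatrices above, realized over a single field shared with the three two-source codes. A secondary point to verify is that setting the inactive source to $0$ in each phase genuinely reduces that coordinate to a bona fide two-source sum-network to which the $\min\{m,n\}=2$ result applies, and that the three coordinate-wise linear codes assemble into one valid $(2,3)$ fractional linear network code.
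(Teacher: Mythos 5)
Your proof is correct, but it is the dual of the paper's construction rather than the same one. The paper normalizes to $n=3$ terminals (reversing the network if necessary), keeps all $m$ sources active in every slot, and pairs the \emph{terminals}: in three time slots it delivers $Sum_1$, $Sum_2$ and $Sum_1+Sum_2$ to $\{t_1,t_2\}$, $\{t_2,t_3\}$ and $\{t_1,t_3\}$ respectively, each slot being a $\min\{m,n\}=2$ instance handled by Ramamoorthy's scalar code; since any two of those three combinations determine both sums, each terminal decodes with a single subtraction. You instead normalize to $m=3$ sources and pair the \emph{sources}: each phase is a two-source instance delivering a pairwise sum $S_j$ to all $n$ terminals, and the work shifts to choosing the precoders $M_i$ and decoding vectors $\boldsymbol\lambda,\boldsymbol\mu$ so that all three $2\times 2$ minors of $\left(\begin{smallmatrix}\boldsymbol\lambda\\ \boldsymbol\mu\end{smallmatrix}\right)$ are nonzero. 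Both routes rest on exactly the same two ingredients (reversibility of the linear coding capacity and the $\min\{m,n\}=2$ solvability result), and the combinatorial core is identical: each source or terminal participates in two of three rounds, and any two of three pairwise-independent linear combinations recover two symbols. The paper's version buys slightly simpler algebra --- its decoders need only subtraction, so it works verbatim over any abelian group --- while your MDS-type condition nominally needs invertibility; however, your explicit columns $(1,0),(0,1),(1,1)$ have unit determinants, so nothing is actually lost. The two secondary points you flag are indeed the ones to state explicitly: that zeroing the inactive source leaves a bona fide two-source sum-network whose source--terminal connectivity is inherited from the original, and that the three scalar phase codes together with the linear maps $M_i$ and the linear decoders assemble into a single $(2,3)$ fractional linear code.
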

\begin{proof}
Without loss of generality, let us assume that the number of terminals
is $3$ (otherwise consider the reverse network).
Let us consider two symbols at each source:
$X_{i1}, X_{i2}$ at $s_i$ for $i=1,2,\ldots, m$.
Let the two sums be denoted as $Sum_1 = \sum_{i=1}^m X_{i1}$.
and $Sum_2 = \sum_{i=1}^m X_{i2}$. If we take two terminals at a time,
the resulting network has a capacity $\geq 1$ as discussed in Section \ref{subsec:min2}
using scalar linear network coding as proposed in \cite{ramamoorthy}.
Now, the two sums $Sum_1$ and $Sum_2$ can be communicated to all the terminals
in three time slots. In the first time slot, $Sum_1$ is communicated
to $t_1$ and $t_2$. In the second time slot, $Sum_2$ is communicated to
$t_2$ and $t_3$. In the third time slot, $Sum_1 + Sum_2 
= \sum_{i=1}^m (X_{i1} + X_{i2})$ is communicated to $t_1$ and $t_3$.
Having received $Sum_1$ (respectively $Sum_2$) and $Sum_1 + Sum_2$, the
terminal $t_1$ (respectively $t_3$) can recover $Sum_2$ (respectively $Sum_1$)
as well. So all the terminals recover the two sums in three time slots, thus
achieving a rate $2/3$ using linear coding.
\end{proof}

It was proved in \cite{langberg3} that if a sum-network with $m=n=3$
has two edge-disjoint paths between any source-terminal pairs, then the
network is linearly solvable, that is, rate $1$ is achievable by
scalar linear coding. This gives the following bound.
\begin{proposition}
The linear coding capacity of any sum-network with $m=n=3$ with
min-cut $\geq 2$ is at least $1$.
\end{proposition}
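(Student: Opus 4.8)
The plan is to obtain this proposition directly as a corollary of the result of \cite{langberg3} quoted immediately above, with the only bridge being the standard equivalence between a min-cut value and a count of edge-disjoint paths. The substantive work — showing that two edge-disjoint paths per source-terminal pair forces scalar linear solvability when $m=n=3$ — is done in \cite{langberg3}, so my task is merely to verify that the hypothesis of the present proposition matches the hypothesis of that theorem.

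First I would invoke the max-flow min-cut theorem (Menger's theorem) on each of the nine source-terminal pairs. In the directed acyclic multigraph $\cG$ with unit-capacity edges, the min-cut between a source $s_i$ and a terminal $t_j$ equals the maximum number of edge-disjoint $s_i$-$t_j$ paths. Hence the hypothesis that every source-terminal pair has min-cut $\geq 2$ is exactly the statement that there exist two edge-disjoint paths between every source-terminal pair.

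With that reformulation in hand, the network satisfies precisely the hypothesis of the cited theorem of \cite{langberg3}: for $m=n=3$, the existence of two edge-disjoint paths between every source-terminal pair guarantees that the network is linearly solvable, i.e., that rate $1$ is achievable by a scalar linear network code. Since a $(1,1)$ linear solution witnesses an achievable linear rate of $1$, and the linear coding capacity is the supremum of all achievable linear rates, the linear coding capacity is at least $1$.

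I do not expect a genuine obstacle here, since the heavy construction is already carried out in \cite{langberg3}. The only points requiring care are to confirm that the min-cut hypothesis is interpreted as the max-flow value, so that Menger's theorem applies cleanly in the multigraph setting, and that the result of \cite{langberg3} is indeed stated for exactly the parameters $m=n=3$ treated here rather than for the broader $\min\{m,n\}=3$ regime used in Theorem \ref{th:23achivability}.
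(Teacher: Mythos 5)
Your proposal is correct and matches the paper's own treatment: the paper likewise derives this proposition immediately from the cited result of \cite{langberg3}, with the (implicit) max-flow min-cut equivalence between ``min-cut $\geq 2$'' and ``two edge-disjoint paths for every source-terminal pair'' as the only bridge. Your explicit invocation of Menger's theorem merely spells out a step the paper leaves unstated.
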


Now we show that the network $\cS_3$ and its extension
$\cS_3^\prime$ shown in Fig. \ref{fig:S3a} both have capacity exactly $2/3$ whereas
their min-cut upper bound is $1$. So, there is a gap between the capacity
and the min-cut upper bound.

\begin{theorem}
The capacity and linear coding capacity of $\cS_3$ and $\cS_3^\prime$ is
$2/3$.
\end{theorem}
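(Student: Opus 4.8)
The plan is to prove the claim by squeezing both quantities between matching bounds. The lower bound is immediate: both $\cS_3$ and $\cS_3^\prime$ have $m=n=3$ and every source-terminal pair connected, so their min-cut is at least $1$ and Theorem~\ref{th:23achivability} already gives linear coding capacity at least $2/3$; since the (general) capacity always dominates the linear coding capacity, the capacity is at least $2/3$ as well. It therefore suffices to prove a single converse, namely that \emph{no} $(k,l)$ fractional code---linear or not---achieves rate $k/l > 2/3$. This one inequality simultaneously caps the capacity at $2/3$, hence caps the linear coding capacity at $2/3$, and together with the lower bound pins both at $2/3$.

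For the converse I would argue information-theoretically, so that non-linear codes are covered. Fix a $(k,l)$ fractional solution over $G$, take the three source vectors $X_1,X_2,X_3 \in G^k$ independent and uniform, and write $W = X_1+X_2+X_3$ for the demanded sum, so that $H(W)=k$ in units of $\log|G|$. Each edge carries an element of $G^l$ and hence an edge-message of entropy at most $l$. The topology of $\cS_3$ (Fig.~\ref{fig:S3}) funnels all source-to-terminal information through a small cyclic bottleneck: there are three critical edges, every source-terminal min-cut equals $1$, and each terminal is fed by only a proper, cyclically shifted subset of these critical edges. I would record, for each of $t_1,t_2,t_3$, the decoding constraint that the accessible critical edge-messages determine $W$, together with the structural constraint describing which source vectors can influence each critical edge-message.

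The core of the argument is then to combine these three decoding constraints with the per-edge entropy bounds so as to obtain $3k \le 2l$, i.e.\ $k/l \le 2/3$. The subtle point is that a single cut only reproduces the trivial min-cut value $k \le l$; to gain the extra factor one must exploit two features together---that all three terminals demand the \emph{same} function $W$, and that each critical edge is shared cyclically between the demands of two different terminals. Summing the three decoding relations then double-counts the shared edges, and, with the source-side functional dependences preventing any one critical edge from carrying the full sum on its own, the three copies of $H(W)$ on the left are forced against only two edge-capacities on the right, yielding $3k \le 2l$.

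Finally, the extension $\cS_3^\prime$ (Fig.~\ref{fig:S3a}) is handled by the same two steps: it has $m=n=3$ and min-cut at least $1$, so Theorem~\ref{th:23achivability} again furnishes the $2/3$ lower bound, while its construction retains exactly the cyclic three-edge bottleneck of $\cS_3$, so the identical converse applies verbatim and caps its capacity at $2/3$. The step I expect to be the main obstacle is precisely this converse combination: translating the qualitative obstruction ``the three terminals cannot combine their inputs into $W$ at rate $1$'' into the quantitative entropy inequality $3k \le 2l$, and in particular carrying out the bookkeeping of the source-side dependences carefully enough that the right-hand side genuinely collapses to $2l$ rather than the vacuous $3l$.
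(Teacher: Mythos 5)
Your lower bound is fine and matches the paper: Theorem~\ref{th:23achivability} gives $2/3$ for both networks, and capacity dominates linear coding capacity. The converse, however, is where the proof actually lives, and your sketch of it rests on a flawed mechanism. First, the topology is misdescribed: $\cS_3$ has \emph{two} bottleneck edges, $(u_1,v_1)$ carrying $f(X_1,X_3)$ and $(u_2,v_2)$ carrying $g(X_2,X_3)$, not a cyclic triple; $t_1$ and $t_2$ each see one of them plus a direct source edge, and $t_3$ sees both. Second, the proposed inequality (``sum the three decoding relations, double-count the shared edges, get $3H(W)\le 2l$'') does not follow from any entropy bookkeeping: the three copies of $W$ are the \emph{same} random variable, so adding $H(W)$ three times creates no new information to charge against the cut --- a single edge of capacity $l=k$ carrying $W$ to all three terminals would satisfy ``each critical edge serves two terminals'' yet violate your inequality, so the claimed step cannot be valid in the generality you state it. The actual argument, and the one the paper gives, is of a different nature: one shows that the \emph{single} terminal $t_3$ can reconstruct the entire independent triple $(X_1,X_2,X_3)$ --- entropy $3k$ --- from the two-edge cut $\{(u_1,v_1),(u_2,v_2)\}$ of capacity $2l$. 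Concretely, $t_1$'s decoding forces $f(X_1,\cdot)$ to be injective in $X_3$ for each fixed $X_1$; $t_2$'s decoding forces $g(X_2,X_3)$ to determine $X_2+X_3$; hence $t_3$, which recovers the sum, can peel off $X_1$ by subtraction, then $X_3$ from $f$, then $X_2$, giving $|G|^{2l}\ge |G|^{3k}$ and $k/l\le 2/3$. Your proposal never identifies this ``one terminal recovers all sources'' step, and without it the $3k$ on the left-hand side has no source.

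There is also a logical direction you have backwards. You plan to prove the converse for $\cS_3$ and assert it ``applies verbatim'' to the richer network $\cS_3^\prime$. Since $\cS_3^\prime$ is obtained from $\cS_3$ by \emph{adding} edges, an upper bound on the capacity of $\cS_3$ says nothing a priori about $\cS_3^\prime$; extra edges can only increase capacity. The paper does the opposite: it proves the $2/3$ upper bound directly for $\cS_3^\prime$ (checking that the added edges into $t_1$ carry only functions of $X_1,X_2$ and so do not disturb the injectivity argument), and then the bound for $\cS_3$ follows because $\cS_3$ is a subnetwork of $\cS_3^\prime$. You need to run the converse on $\cS_3^\prime$, not on $\cS_3$.
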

\begin{proof}
Clearly, the network $\cS_3^\prime$ is obtained from $\cS_3$ by adding one
direct edge from $s_1$ to $t_1$, and subdividing the edge $(s_2, t_1)$
and adding one edge into it from $s_1$. So, the capacity
and the linear coding capacity of the network $\cS_3^\prime$ is at least that
of $\cS_3$. By the previous theorem, the rate $2/3$ is achievable by linear
network coding in $\cS_3$. Now we will show that the
capacity of $\cS_3^\prime$ is bounded from above by $2/3$. This will prove that
both the networks have the same capacity and linear coding capacity
and that these are both $2/3$.

Consider any $(k,l)$ fractional network coding solution of $\cS_3^\prime$.
Let $X_1, X_2, X_3 \in G^k$ be the message blocks generated at
the three sources. Let the edges $(u_1, v_1)$ and $(u_2, v_2)$ carry
the functions $f(X_1, X_3)$and $g(X_2, X_3)$ respectively.
For any fixed values of $X_1$ and $X_2$, the set of messages received
by the terminal $t_1$ should be a one-one function of $X_3$ since
the terminal can recover the sum $X_1+X_2+X_3$ which is a one-one function
of $X_3$. Since the messages on $(s_1, t_1)$ and $(u_3, t_1)$
are fixed by the values of $X_1$ and $X_2$, the message on $(v_1,t_1)$ and
thus $f(X_1, X_3)$ must be a one-one function of $X_3$ for a fixed value
of $X_1$.

Now clearly, for $t_2$ to be able to recover the sum, the function
$g$ should be such that one can recover $X_2+X_3$ from $g(X_2,X_3)$.
Since $t_2$ recovers the sum $X_1+X_2+X_3$, and it can recover
$X_2+X_3$ from the message in $(v_2, t_3)$, it can also recover
$X_1$ by subtracting. Now, $t_3$ receives $f(X_1, X_3)$ on
$(v_1, t_3)$ (WLOG) and this is a one-one function of $X_3$ for any
given $X_1$. So, having recovered $X_1$, $t_3$ can recover $X_3$ from
$f(X_1, X_3)$. Then by using $g(X_2, X_3)$ received on $(v_2, t_3)$
and the value of $X_3$, $t_3$ can also recover $X_2$. So, $t_3$
can recover all the original messages $X_1, X_2, X_3$. Now $(X_1, X_2, X_3)$
takes a total of $|G|^{3k}$ possible values as a triple. On the other hand
$\{(u_1, v_1), (u_2, v_2)\}$ is a cut between the sources and $t_3$,
and this cut can carry at most $|G|^{2l}$ possible different message-pairs.
So, we have $|G|^{2l} \geq |G|^{3k} \Rightarrow k/l \leq 2/3$.
\end{proof}

The following observations lead us to believe that the network $\cS_3^\prime$
is essentially the only maximal extension of $\cS_3$ which has the same
capacity.

\begin{enumerate}
\item Further subdividing $(u_3,t_1)$ and adding
an edge from it to $t_2$ makes the network $\cX_3$ a subgraph of the
resulting network, and thus the capacity of the network increases to $1$.

\item Also subdividing $(s_1,t_2)$ and adding
an edge from it into $t_1$ does not change its capacity since there
is already an edge $(s_1, t_1)$ and the new edge can not carry any extra
information to $t_1$. 

\item Also subdividing $(s_1,t_2)$ and adding
an edge to it from $s_2$ gives a strictly richer network than $\cX_3$,
and thus the capacity of the network increases to $1$.

\item Instead of the edge $(s_1, t_1)$, if an edge $(s_2, t_2)$ is added,
then the resulting network (shown in Fig. \ref{fig:X3a}) is strictly richer than $\cX_3$ because the
edges $(s_1, t_2)$ and $(s_2, t_2)$ can jointly carry more information
than an edge from $u_3$ to $t_2$. So the resulting network has capacity
$1$ even though it does not have a binary scalar solution (like $\cX_3$).
\end{enumerate}

These observations also lead us to believe that

\begin{conjecture}
The capacity of a sum-network with $m=n=3$ is either $0, 2/3$ or at least $1$.
\end{conjecture}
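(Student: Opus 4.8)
The plan is to split the conjecture into two independent gaps — ruling out capacity in $(0,2/3)$ and ruling out capacity in $(2/3,1)$ — and to treat them by completely different means. The first gap is essentially already settled by the results above. If some source-terminal pair is disconnected, then that terminal receives nothing about the corresponding source process and can never recover a sum that genuinely depends on it, so together with the min-cut upper bound the capacity is forced to be exactly $0$. Otherwise every pair is connected, the minimum min-cut is at least $1$, and Theorem \ref{th:23achivability} gives capacity at least $2/3$. Hence no $m=n=3$ sum-network can have capacity strictly between $0$ and $2/3$, which disposes of the first gap.

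The substance of the conjecture is the second gap, $(2/3,1)$. Here the strategy I would follow is a structural dichotomy driven by the bottleneck structure. By the proposition preceding the conjecture, if every source-terminal pair has min-cut at least $2$ then the linear coding capacity, and hence the capacity, is already at least $1$; so I may assume that some pair $(s_i,t_j)$ has min-cut exactly $1$, that is, there is a single edge $e^\ast$ whose deletion separates $s_i$ from $t_j$. By the min-cut upper bound the capacity then lies in $[2/3,1]$, and the goal is to show it cannot lie strictly inside. The aim is to prove that such a min-cut-$1$ bottleneck either (a) coexists with enough independent connectivity among the remaining flows that the network embeds $\cX_3$, or one of the strictly richer networks identified in the four observations above, as a subnetwork, thereby forcing capacity at least $1$; or (b) imposes exactly the information bottleneck that produced the bound for $\cS_3^\prime$, forcing the capacity down to exactly $2/3$.

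The engine for case (b) is the cut-set counting argument already used for $\cS_3^\prime$: one locates a two-edge cut separating all sources from some terminal and shows, using the decoding requirements of the \emph{other} two terminals, that this terminal must in fact reconstruct all three source blocks $X_1,X_2,X_3$ individually and not merely their sum, so that $|G|^{2l}\ge |G|^{3k}$ and $k/l\le 2/3$. I would try to abstract this into a lemma: whenever a min-cut-$1$ bottleneck is present and the network does \emph{not} embed $\cX_3$, there is always such a two-edge cut at which the residual decoding constraints force full reconstruction of all three sources, so that the only cut-set bound available below $1$ has value exactly $2/3$.

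The hard part, and the reason this remains a conjecture rather than a theorem, is precisely the completeness of this dichotomy. One must rule out any intermediate topology whose tightest cut-set inequality reads, say, $3l\ge 4k$ or $4l\ge 5k$, which would place the capacity strictly inside $(2/3,1)$; intuitively, with only three sources the relevant inequalities pin the denominator to values that cannot yield such a rate, but establishing this requires either a complete and delicate enumeration of the capacity-relevant $m=n=3$ topologies — the hand analysis behind the four observations is exactly a partial attempt at this — or a general entropy argument showing that any code of rate above $2/3$ already supplies, across the bottleneck, enough joint information to be re-routed into a rate-$1$ solution. Making this exhaustive against \emph{all} fractional, including non-linear, codes is the principal obstacle.
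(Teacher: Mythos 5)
This statement is labeled a conjecture in the paper, and the paper offers no proof of it --- only the four observations about one-edge extensions of $\cS_3^\prime$ as heuristic evidence --- so there is no proof to match your attempt against. Your attempt is, correctly, also not a proof. The part you do establish is sound and is the same reasoning the paper's results already give: a disconnected source-terminal pair forces capacity $0$ via the min-cut upper bound, and otherwise Theorem \ref{th:23achivability} (together with capacity $\geq$ linear coding capacity) excludes the interval $(0,2/3)$. Likewise your reduction via the proposition on min-cut $\geq 2$, which confines the problem to networks with some source-terminal min-cut exactly $1$, is a correct and useful first step that the paper does not state explicitly.

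The genuine gap is the one you name yourself: the completeness of the dichotomy on $(2/3,1)$. Two specific obstacles deserve emphasis. First, your case (a) relies on ``embeds $\cX_3$ (or a richer network) as a subnetwork implies capacity $\geq 1$''; this is plausible for supergraphs on the same source and terminal sets, but the paper's observations only verify four particular one-edge augmentations of $\cS_3^\prime$, which is far from an enumeration of all $m=n=3$ topologies with a unit bottleneck. Second, your case (b) lemma --- that absent an $\cX_3$ embedding there is always a two-edge cut at which the other terminals' decoding requirements force full reconstruction of $(X_1,X_2,X_3)$, yielding $|G|^{2l}\geq|G|^{3k}$ --- is exactly the step that would have to exclude intermediate cut-set or entropy inequalities such as $3l\geq 4k$, and nothing in the paper or in your sketch rules these out, especially against non-linear fractional codes. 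So the proposal is an honest and reasonable plan of attack whose decisive lemma is unproven; it neither proves the conjecture nor contradicts anything in the paper, which leaves the statement open as well.
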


\vspace*{-4mm}
\begin{center}
\begin{figure}[h]
\centering\includegraphics[width=2.0in]{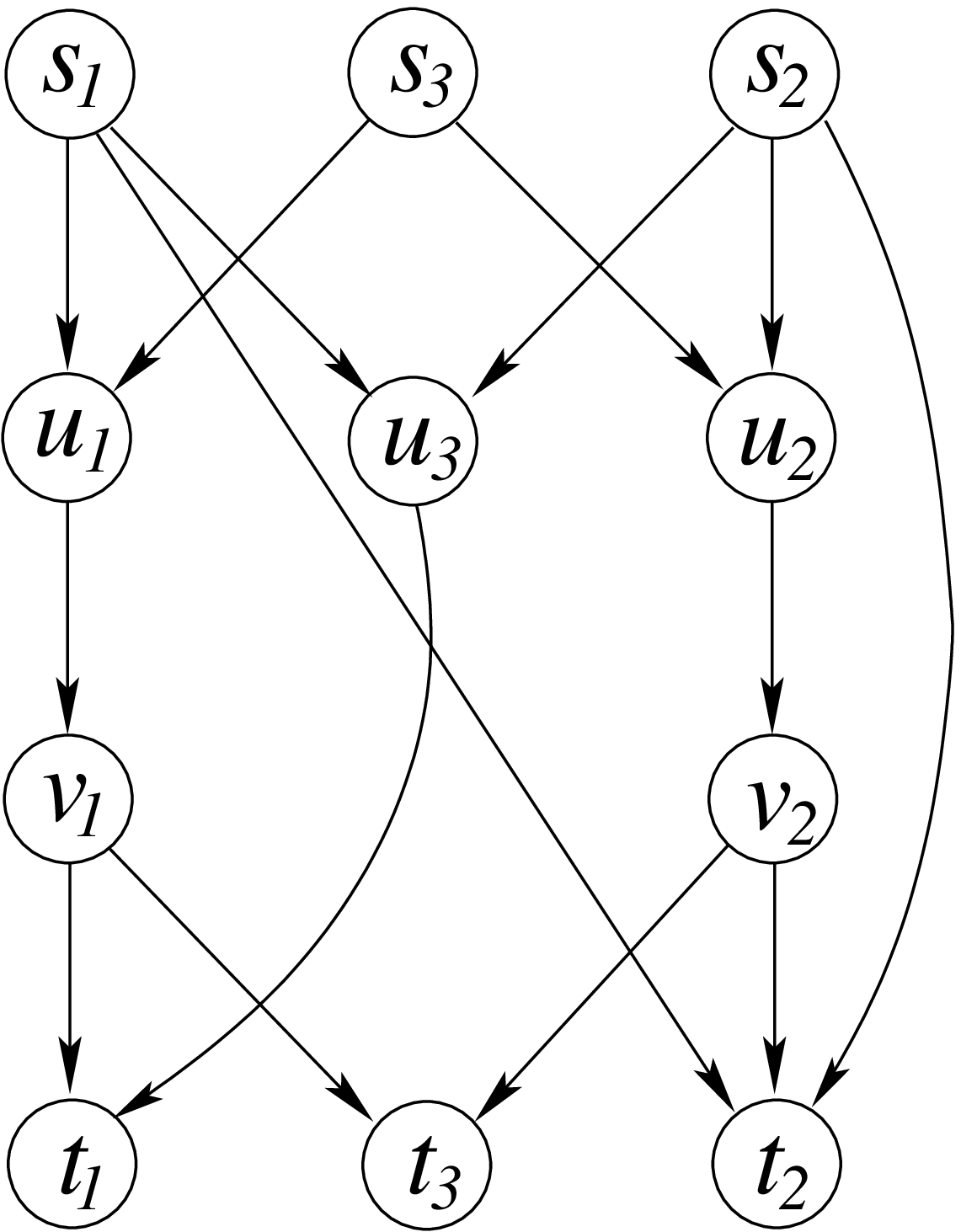} \caption{The
network $\cX_3^\prime$} \label{fig:X3a}
\end{figure}
\end{center}

\begin{table*}[t]
\begin{center}
\begin{tabular}{|l||l|l||l|l|}
\hline\hline
 & \multicolumn{2}{|c||}{Solvability} & \multicolumn{2}{|c|}{Capacity} \\ \cline{2-5}
 & min-cut $= 1$ & min-cut $> 1$ & min-cut $=1$ & min-cut $>1$ \\\hline\hline
$\min\{m,n\}=1$ & Solvable & Solvable & $=1$ & = min-cut \\\hline
$\min\{m,n\}=2$ & Solvable & Solvable & $=1$ & $\geq \text{min-cut}/2$ (loose/tight?) \\\hline
$m=n=3$ & Network dependent & Solvable & $\geq 2/3$ (tight) & $\geq \max\{1, \text{min-cut}/3\}$ (loose!) \\\hline
$\min\{m,n\}=3$ & Network dependent & ? & $\geq 2/3$ (tight) & $\geq \text{min-cut}/3$ (loose!) \\\hline
$\min\{m,n\}>3$ & Network dependent & ? & $\geq 2/\min\{m,n\}$ (loose!) & $\geq \text{min-cut}/\min\{m,n\}$ (loose!) \\\hline\hline
\end{tabular}
\caption{Solvability and bounds on the capacity}
\label{table:bounds}
\end{center}
\end{table*}

\subsection{The case of $m,n > 3$}
This is the most ill-understood class of sum-networks. We only have
what we suspect to be a very loose lower bound on the capacity of this
class of networks. This lower bound is obtained by similar coding
by time-sharing scheme as in the proof of Theorem \ref{th:23achivability}.
\begin{theorem}
The linear coding capacity of a sum-network with $\min\{m,n\}\geq 2$ and
min-cut $\geq 1$ is at least $2/\min\{m,n\}$.
\end{theorem}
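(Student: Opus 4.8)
The plan is to generalize the time-sharing construction from the proof of Theorem \ref{th:23achivability} from the parameter $3$ to the general parameter $N \defn \min\{m,n\}$. As before, by considering the reverse network if necessary, I would assume without loss of generality that the number of terminals equals $N$, so the terminals are $t_1,\ldots,t_N$. The key fact I would invoke is exactly the one used for the $N=3$ case: since the min-cut is at least $1$, every source-terminal pair is connected, and by the result of \cite{ramamoorthy} (as stated in Section \ref{subsec:min2}) any \emph{pair} of terminals can recover a single sum of all the sources at rate $1$ using scalar linear network coding, because restricting attention to two terminals yields a network with $\min\{m,n\}=2$ and min-cut $\geq 1$.

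The construction I would use is to block two symbols at each source, $X_{i1}$ and $X_{i2}$ at $s_i$, and define the two sums $Sum_1 = \sum_{i} X_{i1}$ and $Sum_2 = \sum_{i} X_{i2}$, aiming to deliver both sums to all $N$ terminals. The goal is to deliver these two sums in $N$ time slots (giving rate $2/N$), and the combinatorial core is to schedule, across the $N$ slots, a collection of ``single-sum multicasts'' — each delivering one abelian-group element that is a fixed linear combination of $Sum_1$ and $Sum_2$ to a chosen pair of terminals — so that every terminal ends up with enough independent combinations to solve for both $Sum_1$ and $Sum_2$. A clean way to organize this is to think of each terminal as needing two linearly independent combinations of the pair $(Sum_1, Sum_2)$; since each of the $N$ transmissions serves two terminals, $N$ transmissions provide $2N$ terminal-reception events, which is exactly enough to give each of the $N$ terminals the two combinations it needs. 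I would exhibit an explicit schedule realizing this, for instance sending $Sum_1$ to $(t_1,t_2)$, then a suitable new combination to each successive overlapping pair $(t_2,t_3),(t_3,t_4),\ldots$, chaining the slots so that consecutive terminals share one received combination and each receives a second, independent one, and finally closing the chain so that $t_1$ and $t_N$ are also completed.

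The step I expect to be the main obstacle is verifying that such a schedule can always be arranged so that \emph{every} terminal receives two \emph{linearly independent} combinations of $(Sum_1, Sum_2)$ over the integers (equivalently, over the group $G$ viewed as a $\mathbb{Z}$-module), rather than merely two combinations that might be dependent. In the $N=3$ proof this was handled by the specific choice $Sum_1$, $Sum_2$, and $Sum_1+Sum_2$, with each terminal receiving two of these three pairwise-independent elements; for general $N$ I would need to confirm that an analogous assignment of combinations to the $N$ overlapping terminal-pairs exists, and the simplest route is probably to reduce back to the $N=3$ pattern by noting that we only ever need two sums and a suitable three-slot rotation among any three terminals can be stitched together, or more directly, to give a short inductive or explicit round-robin argument that $N$ pair-multicasts suffice. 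Once the schedule and the independence of the received combinations are in place, each terminal inverts its $2\times 2$ system to recover $Sum_1$ and $Sum_2$, and since each pair-multicast is itself a scalar-linear solution by \cite{ramamoorthy}, the whole scheme is linear; delivering $2$ sums in $N$ uses then yields linear coding capacity at least $2/\min\{m,n\}$, as claimed.
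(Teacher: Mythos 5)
Your construction is correct in spirit but takes a genuinely different route from the paper's. The paper avoids your ``main obstacle'' entirely: instead of overlapping pairs arranged in a chain, it partitions the terminals into \emph{disjoint} groups --- $n/2$ disjoint pairs when $n$ is even (sending a single sum, one pair per slot, for rate $1/(n/2)=2/n$), and $(n-3)/2$ disjoint pairs plus one triple when $n$ is odd (sending two sums: two slots per pair, and the three-slot rotation of Theorem \ref{th:23achivability} for the triple, for rate $2/n$). Because the groups are disjoint, no terminal ever has to combine receptions from two different pair-multicasts, so the linear-independence bookkeeping you worry about never arises except inside the already-proven three-terminal gadget. Your overlapping-cycle schedule does work, and the step you flag as the obstacle is completable: you need to label the $N$ edges of the cycle $(t_1,t_2),(t_2,t_3),\ldots,(t_N,t_1)$ with elements of $\{Sum_1,\ Sum_2,\ Sum_1+Sum_2\}$ so that adjacent edges get distinct labels (any two distinct labels from this set determine both sums over any abelian group by a single subtraction); this is just a proper edge-coloring of the cycle $C_N$, which needs two colors for even $N$ (alternate $Sum_1,Sum_2$) and three for odd $N$. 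So both proofs deliver the same rate $2/\min\{m,n\}$; the paper's is simpler because the disjoint partition sidesteps the scheduling argument, while yours is arguably more uniform in that it treats all $N\geq 3$ by one cyclic pattern rather than splitting into parity cases with a special triple. You should, however, actually write out the edge-coloring step rather than leaving it as ``probably'' reducible to the $N=3$ pattern, since as stated your proof is not yet complete at that point.
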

\begin{proof}
The case of $\min\{m,n\} = 2$ follows from Theorem \ref{th:23achivability}. For
$\min\{m,n\} > 2$, without loss of generality, let us assume that $n\leq m$.
For even $n$, we can group the terminals into $n/2$ pairs and in each
time slot communicate the sum of the source symbols to one pair of
terminals. So, in $n/2$ time slots we can communicate one sum of the source
symbols to all the terminals thus achieving a rate $2/n$. For odd $n$,
we can group the terminals into $(n-3)/2$ pairs and one triple. We can
communicate one sum to each pair in one time slot. So, we can communicate
two sums to all the pairs in $(n-3)$ slots. Then using the same
scheme as in the proof of Theorem \ref{th:23achivability},
we can communicate two sum to
the group of three terminals in three time slots. So, in overall
$n$ time slots, we can communicate two sums to all the terminals
in the network. This gives us a rate $2/n = 2/\min\{m,n\}$.
\end{proof}

We believe that this bound is very loose and there is scope for improvement.
Even though we failed to come up with any achievable scheme of
higher rate, we also failed to construct a network satisfying the bound
with equality.

\section{Discussion}
\label{sec:disc}
Some upper and lower bounds on the capacity of communicating the sum of
sources to a set of terminals are presented in this paper. A decreasing degree
of tightness is observed or suspected in these bounds as the numbers of
sources and terminals increase. We summarize the bounds in Table
\ref{table:bounds} to bring out this observation. The parenthetic
comments in the table entries indicate the tightness of the bound
as known or conjectured (indicated with an exclamation (!) mark.)
The interrogation (?) mark as an entry indicates that nothing is known
about the case.

\comment{
----done till here----

Consider a sum-network having $m$ sources $s_1,s_2,\ldots,s_m$ and $n$ terminals 
$t_1,t_2,\ldots,t_n$. Let $\mbox{\emph{min-cut}} \ (s_i-t_j)$ be the minimum capacity 
of all the cuts between $s_i$ and $s_j$. Let 
\begin{equation}
 \eta = min_{i,j}(\mbox{\emph{min-cut}} \ (s_i-t_j))
\end{equation}
where $i$ varies from $1$ to $m$ and $j$ varies from $1$ to $n$.

Let the minimum of min-cut capacities $\eta$ is between the source $s_i$ and the
terminal $t_j$. In a sum-network, since each terminal requires the same number of source 
symbols from all the sources, each terminal requires the same number of symbols from source $s_i$. 
Since the source $s_i$ can not multicast more than $\eta$ symbols per unit time to all the terminals,
the network coding capacity can not be greater than $\eta$. 
So, we have the following result. 

\begin{lemma}
\label{lem:capacity_upper bound}
The network coding capacity of a sum-network is upper bounded by the minimum
of min-cut capacities of all source-terminal pairs.
\end{lemma}

We call $\eta$ the \emph{min-cut bound}.

\section{The reverse sum-network of a given sum-network}
\label{relation}
Recall that for given a sum-network, its reverse network is defined as a sum-network with the 
same set of nodes, the edges reversed keeping their capacities same, and the role 
of sources and terminals interchanged. We will be using the following result, 
proved in \cite{Raiarxiv}, several times in this paper.

\begin{lemma}\cite{Raiarxiv}
\label{lem:lem1}
A sum-network has a solution using a $(k,l)$ fractional linear network code over a finite field $F$ if and only 
if its reverse sum-network has a solution using a $(k,l)$ fractional linear network code over $F$.
\end{lemma}

Let the set of edges in a sum-network be $E$ and the set of edges in the corresponding reverse sum-network be $\tilde{E}$. 
Let the corresponding edge in the reverse sum-network for an edge $e$ in the sum-network be $\tilde{e}$. Let the local coding coefficient between edge $e$ and $e^\prime$ be $\alpha_{e,e^\prime}$, where $e,e^\prime \in E$.  
The above Lemma in \cite{Raiarxiv} was proved by taking local coding coefficients of a $(k,l)$ fractional linear network code for the corresponding reverse sum-network as $\beta_{\tilde{e}^\prime , \tilde{e}}= \alpha_{e,e^\prime}^T$, where $e,e^\prime \in E$, $\tilde{e}^\prime , \tilde{e} \in \tilde{E}$ and `$T$' denotes transpose.

\section{The sum-networks having one terminal}
\label{one}

The min-cut bound on the network coding capacity may not be achievable in general. 
A special case arises when there is only one source or one terminal
in the sum-network. Note that a multicast network is
a special case of sum-network with only one source, and its reverse network
is a sum-network with only one terminal.  

\begin{figure*}[t]
\begin{minipage}[b]{0.4\linewidth}
\centering
\includegraphics[width=3.2in]{N34.eps}
\caption{A generic sum-network $\net_1$ having only one terminal}
\label{sumnetworkwithoneterminal}
\end{minipage}
\hspace{0.9cm}
\begin{minipage}[b]{0.5\linewidth}
\centering
\includegraphics[width=3.2in]{N35.eps}
\caption{The reverse network $\net_2$ of the sum-network $\net_1$}
\label{multicastnetwork}
\end{minipage}
\end{figure*}

Consider a generic sum-network $\net_1$, having only one terminal, shown in Fig. \ref{sumnetworkwithoneterminal}.
There are $m$ sources $s_1,s_2,\ldots,s_m$ and a terminal $t$. Let the min-cut bound be $\eta$; so the 
network coding capacity can not be greater than $\eta$. For every $i=1,2,\ldots,m$, $\eta$ symbols generated
by the source $s_i$ are shown as $X_{i1},X_{i2},\ldots,X_{i\eta}$ incoming symbols at the source $s_i$. 
All the symbols are taken from a finite field $F$. The recovered symbols at the terminal $t$ are shown as 
$Y_1,Y_2,\ldots,Y_\eta$ outgoing symbols from $t$. If there exists a network code which gives a solution for 
the sum-network $\net_1$ then we have $Y_i=X_{1i}+X_{2i}+\ldots+X_{mi}$ for $1\leq i \leq \eta$.  

The reverse network $\net_2$ of the sum-network $\net_1$ is shown in Fig. \ref{multicastnetwork}. Note that the minimum of min-cut capacities of all source-terminal pair $\eta$ in the network $\net_1$ is same as the minimum of min-cut capacities of all source-terminal pair $\eta$ in the network $\net_2$. The network $\net_2$ is 
a multicast network where node $t$ is a source which generates symbols $Y_1,Y_2,\ldots,Y_\eta$ and nodes $s_1,s_2,\ldots,s_m$ are terminals. For every $i=1,2,\ldots,m$, terminal $s_i$ recovers symbols $X_{i1},X_{i2},\ldots,X_{i\eta}$. If there exists a network code which gives a solution for the multicast network $\net_2$ then we have $X_{ij}=Y_{j}$ for $1\leq i \leq m, 1\leq j \leq \eta$.

For a multicast network, the the network coding capacity 
is known \cite{AhlCLY:00} to be the minimum of the min-cut capacities from the source 
to each terminal, and scalar linear network coding over sufficiently large finite field achieves 
this network coding capacity capacity \cite{LiYC:02}.

So, by taking the reverse of a multicast network and using Lemma \ref{lem:lem1}, we have,
\begin{theorem}
The network coding capacity of a sum-network with only one terminal
is the minimum of the min-cuts between each source and the terminal.
\end{theorem}

The authors in \cite{Appuswamy} mention to have a proof of the above result though the 
proof is not given in their paper.

Now we give the following examples.

\textit{{Example \ 1:}} \ Consider a sum-network $\net_3$ shown in Fig. \ref{sum_butterfly}. 
The value of the min-cut bound $\eta$ for $\net_3$ is two. The reverse sum-network $\net_4$ is 
obtained by reversing direction of all the edges in $\net_3$. $\net_4$ is the most cited network in the area of
network coding. Local coding coefficients for scalar linear network code which enables the source $t$ in $\net_4$ 
to multicast two symbols $Y_1$ and $Y_2$ is shown in Fig. \ref{butterfly}. By taking the local coding coefficient for the sum-network $\net_3$ as described in Section \ref{relation}, the sum-network $\net_3$ achieves the network coding capacity $2$.
\begin{figure*}[t]
\begin{minipage}[b]{0.4\linewidth}
\centering
\includegraphics[width=2.0in]{N36.eps}
\caption{A sum-network $\net_3$}
\label{sum_butterfly}
\end{minipage}
\hspace{0.3cm}
\begin{minipage}[b]{0.4\linewidth}
\centering
\includegraphics[width=2.0in]{N37.eps}
\caption{The reverse sum-network $\net_4$}
\label{butterfly}
\end{minipage}
\end{figure*}

\textit{{Example \ 2:}} \ Consider a sum-network $\net_5$ shown in Fig. \ref{sum_binary_insufficient}. The reverse sum-network $\net_6$ of the sum-network $\net_5$ is shown in Fig. \ref{binary_insufficient}. $\net_6$ is a multicast network. It has appeared in  \cite{lehman1, riis}. The multicast capacity of $\net_6$ is $2$. The local coding coefficients for scalar linear network code for $\net_6$ is shown in Fig. \ref{binary_insufficient}. By taking the local coding coefficient for the sum-network $\net_5$ as described in Section \ref{relation}, the sum-network $\net_5$ achieves the network coding capacity $2$.

\begin{figure*}[t]
\begin{minipage}[b]{0.4\linewidth}
\centering
\includegraphics[width=3.2in]{N38.eps}
\caption{A sum-network $\net_5$}
\label{sum_binary_insufficient}
\end{minipage}
\hspace{0.9cm}
\begin{minipage}[b]{0.5\linewidth}
\centering
\includegraphics[width=3.2in]{N39.eps}
\caption{The reverse sum-network $\net_6$}
\label{binary_insufficient}
\end{minipage}
\end{figure*}

\section{The sum-networks having $m = 2, n \geq 2$
or $m \geq 2, n = 2$}
\label{two}
It was shown in \cite{ramamoorthy} that for a sum-network having 
$m=2 \ \mbox{or} \ n=2$ and at least one path from every source to 
every terminal, it is always possible to communicate the sum of the 
symbols generated at all the sources. There is no constraint on the finite field.
The symbols can be taken from a finite field of any size. Moreover, the sum can be 
communicated to all the terminals using scalar linear network coding. This implies that 
for a sum-network having $m = 2, n \geq 2$ or 
$m \geq 2, n = 2$ and $\eta = 1$, the network coding capacity is $1$ and 
achievable. 

For the case when $\eta \geq 2$, we give the following lower bound. 

\begin{theorem}
\label{lower_bound1}
The network coding capacity of a sum-network having $m = 2, n \geq 2$ or 
$m \geq 2, n = 2$ and $\eta \geq 2$ is at least $\eta/2$.
\end{theorem}
\begin{proof}
Consider the case when $m = 2, n \geq 2$. Let the sources be  
$s_1$ and $s_2$. $s_1$ and $s_2$ are generating symbols from a finite 
field of the size $O(n)$. Let the terminals be $t_1,t_2,\ldots,t_n$. We show the
existence of a rate $\eta/2$ fractional linear network code for communicating the sum of symbols
generated at the sources $s_1$ and $s_2$ to all the terminals. Consider 
a scalar linear network code $C_1$ which enables the source $s_1$ to multicast $\eta$
symbols $x_{11},x_{12},\ldots,x_{1\eta}$ to all the terminals in one time interval. 
Similarly, consider a scalar linear network code $C_2$ which enables the source $s_2$ to 
multicast $\eta$ symbols $x_{21},x_{22},\ldots,x_{2\eta}$ to all the terminals.
Existence of such network codes was proved in \cite{JagSanCEEJT}. Taking $C_1$ in first time interval 
and $C_2$ in next time interval, all the terminals receive symbols $x_{11},x_{12},\ldots,x_{1\eta}$ and $x_{21},x_{22},\ldots,x_{2\eta}$ in two unit time. Now all the terminals add the 
respective components to get $x_{11}+x_{21},x_{12}+x_{22},\ldots,x_{1\eta}+x_{2\eta}$.
Use of $C_1$ in first time interval and $C_2$ in second time interval and addition of respective 
components at the terminals forms a rate $\eta/2$ $(\eta,2)$ fractional linear network code.

The reverse network of a sum-network having $m = 2, n \geq 2$ is a sum-network having $m \geq 2, n = 2$. 
By Lemma \ref{lem:lem1} and the above proof for a sum-network having 
$m = 2, n \geq 2$, there exists a rate $\eta/2$ fractional linear
network code which gives a solution for the sum-networks having $m = 2, n \geq 2$ or 
$m \geq 2, n = 2$ and $\eta \geq 2$. Since there exists a rate $\eta/2$ fractional linear network code for a
 sum-network having $m = 2, n \geq 2$ or $m \geq 2, n = 2$ and $\eta \geq 2$, the network 
coding capacity of such a sum-network is greater than or equal to $\eta/2$. This completes the proof. 
\end{proof}

\textit{Remark 1:} It would be interesting to investigate the tightness of the lower bound 
given by Theorem \ref{lower_bound1}. 

\section{The sum-networks having $m \geq 3, n \geq 3$}
\label{morethantwo}

\begin{figure}[h]
\centering
\includegraphics[width=2.0in]{S3.eps}
\caption{A sum-network $\spc_3$}
\label{fig:S3}
\end{figure}

In this section, we start with a sum-network $\spc_3$ shown in Fig. \ref{fig:S3}. 
This sum-network was first appeared in \cite{RaiD:09} and was also discovered independently by Langberg et al. \cite{michael1}.
The sum-network $\spc_3$ has three sources $s_1$, $s_2$ and $s_3$ and three terminals $t_1$, $t_2$ and $t_3$.
The minimum of the min-cut capacities of all source-terminal pairs is $1$. So, the network 
coding capacity of this network $\spc_3$ is upper bounded by $1$. It was shown in \cite{RaiD:09} 
that it is not possible to communicate the sum of symbols generated at all the sources to all 
the terminals using any $k$-length vector linear network code over any finite field in $\spc_3$. It was shown in \cite{michael1} that it is not possible to communicate the sum of symbols generated at all the sources to all the terminals even by using non-linear network coding over any finite alphabet in the sum-network $\spc_3$. We give an alternative proof of this result. The proof technique may be of independent interest. This result shows that the min-cut bound $\eta$ on the network coding capacity for the sum-network $\spc_3$ is not achievable. 

\begin{theorem}\label{insufficiency}
In the sum-network $\spc_3$, it is not possible to communicate the sum 
of the symbols generated at all the sources to all the terminals over any finite 
field using rate $1$ fractional network code.
\end{theorem}
\begin{proof}
Let the sources $s_1$, $s_2$ and $s_3$ generate symbols 
$X_1$, $X_2$ and $X_3$ respectively from a finite field $F$.
Without loss of generality we assume that 
\begin{subequations}\label{eq:encode}
\begin{eqnarray}
Y_{(s_1,u_1)} & = & Y_{(s_1,t_2)} = X_1, \label{assum1} \\
Y_{(s_2,u_2)} & = & Y_{(s_2,t_1)} = X_2, \label{assum2} \\
Y_{(s_3,u_1)} & = & Y_{(s_3,u_2)} = X_3, \label{assum3} \\
Y_{(u_1,v_1)} & = & Y_{(v_1,t_1)} = Y_{(v_1,t_3)}, \label{assum4} \\
Y_{(u_2,v_2)} & = & Y_{(v_2,t_2)} = Y_{(v_2,t_3)}. \label{assum5} 
\end{eqnarray}
\end{subequations}

The encoding functions are described as follows.
\begin{subequations}\label{eq:nonlinear_encode}
\begin{eqnarray}
Y_{(u_1,v_1)} & = & f_1(X_1,X_3), \label{nonlinear_encode1} \\
Y_{(u_2,v_2)} & = & f_2(X_2,X_3). \label{nonlinear_encode2} 
\end{eqnarray}
\end{subequations}

The decoded symbols at the terminals in terms of the decoding functions 
are described as follows.
\begin{subequations}\label{eq:nonlinear_decode}
\begin{eqnarray}
Y_{t_1} & = & g_1(f_1(X_1,X_3),X_2), \label{nonlinear_decode1} \\
Y_{t_2} & = & g_2(f_2(X_2,X_3),X_1), \label{nonlinear_decode2} \\
Y_{t_3} & = & g_3(f_1(X_1,X_3),f_2(X_2,X_3)). \label{nonlinear_decode3} 
\end{eqnarray}
\end{subequations}

We have the following condition for all the terminals to recover the sum 
of $X_1, X_2,\mbox{and } X_3$.

\begin{eqnarray}
Y_{t_1} = Y_{t_2} = Y_{t_3} = X_1+X_2+X_3. \label{condition1} 
\end{eqnarray}

Now we prove the following claims about functions $f_1,f_2,g_1,g_2,\mbox{and } g_3$.

\textit{Claim 1:}\label{f_1bijective}
The function $f_1(X_1,X_3)$ is bijective on $X_1$ if $X_3$ is fixed and is bijective on 
$X_3$ when $X_1$ is fixed.

\textit{ \ Proof:} For the fixed values of $X_3$ and $X_2$, the function $g_1$ is a 
bijective function of $X_1$ by (\ref{nonlinear_decode1}) and  (\ref{condition1}).
This in turn implies that $f_1(X_1,X_3)$ is bijective on $X_1$ if $X_3$ is fixed.
Similarly, for the fixed values of $X_1$ and $X_2$, the function $g_1$ is a 
bijective function of $X_3$ by (\ref{nonlinear_decode1}) and  (\ref{condition1}).
This in turn implies that $f_1(X_1,X_3)$ is bijective on $X_3$ if $X_1$ is fixed.

\textit{Claim 2:}\label{f_2bijective}
The function $f_2(X_2,X_3)$ is bijective on $X_2$ if $X_3$ is fixed and is bijective on 
$X_3$ when $X_2$ is fixed.

\textit{ \ Proof:} The proof is similar to the proof of Claim $1$ and is omitted.

\textit{Claim 3:} $g_1(.,.)$ is bijective on each argument
for any fixed value of the other argument.

\textit{\ \  Proof:} For any element of $F$, by claim 1, there exists
a set of values for $X_1$ and $X_3$ so that the first argument
$f_1(.,.)$ of $g_1$ takes that value. For such a set of fixed values of $X_1$ and $X_3$,
$g_1(.,X_2)$ is a bijective function of $X_2$ by (\ref{nonlinear_decode1}) and (\ref{condition1}).
Now, for fix some values for $X_2$ and $X_3$, by (\ref{nonlinear_decode1}) and (\ref{condition1}), 
$g_1(f_1(X_1,.),.)$ is a bijective function of $X_1$. This implies that $g_1$ is a bijective function
of its first argument for any fixed value of the second argument.

\textit{Claim 4:} $g_2(.,.)$ is bijective on each argument
for any fixed value of the other argument.

\textit{ \ Proof:} The proof is similar to the proof of Claim $3$ and is omitted.

\textit{Claim 5:}
The function $f_1(X_1,X_3)$ is symmetric, i.e., interchanging the values of $X_1$ and $X_3$ 
does not change the value of the function $f_1$.

\textit{ \ Proof:} For some fixed values of $X_1$ and $X_3$, suppose the value
of $f_1$ is $c_1$. We also fix the value of $X_2$ as $c_2$. 
Suppose $g_1(c_1,c_2)=c_3$. Now we interchange the
values of the variables $X_1$ and $X_2$. Then it follows from Claim 3 that
the value of $f_1$ must remain the same. 

\textit{Claim 6:}
The function $f_2(X_2,X_3)$ is symmetric, i.e., interchanging the values of $X_2$ and $X_3$ 
does not change the value of the function $f_2$.

\textit{ \ Proof:} The proof is similar to the proof of Claim $5$ and is omitted.

\textit{Claim 7:}
The function $g_3$ is symmetric, i.e., interchanging the values of any two variables among 
$X_1$, $X_2$ and $X_3$ does not change the value of $g_3$. 

\textit{ \ \ Proof:} The proof follows trivially from (\ref{nonlinear_decode3}) and 
(\ref{condition1}). 

Now we show that all the terminals can not simultaneously recover $X_1+X_2+X_3$. 
Supposing that all the terminals recover $X_1+X_2+X_3$, we should find a contradiction. 

Since both the function $g_3$ and $f_1$ are symmetric functions, interchanging the values of 
$X_1$ and $X_3$ does not change the values of both $g_3$ and $f_1$. This implies that
the function $f_2$ does not depend on $X_3$. This gives a contradiction by Claim 2.
So, it is not possible to communicate $X_1+X_2+X_3$ to all the terminals simultaneously.
\end{proof}

A method of combining two networks to obtain a larger network was given in \cite{RaiD:09}. The method was 
termed as {\it crisscrossing}. We will use this construction method to prove that the upper bound $\eta$ on
the network coding capacity of a sum-network having $m \geq 3, n \geq 3$ is not achievable. The description 
of the construction method crisscrossing is as follow. Let $\net_1$ be a directed
acyclic network with some sources $S_1 \subseteq V(\net_1)$
and some terminals $T_1 \subseteq V(\net_1)$. Similarly
let $\net_2$ be a directed acyclic network with some sources $S_2 
\subseteq V(\net_2)$ and some terminals $T_2 \subseteq V(\net_2)$.
We assume that the nodes of $\net_1$ and $\net_2$ are labeled such that
$V(\net_1) \cap V(\net_2) = \phi$. The crisscrossed network $\net_1
\bowtie \net_2$ has the node set $V(\net_1 \bowtie \net_2)
= V(\net_1) \cup V(\net_2)$, and the edge set $E(\net_1 \bowtie \net_2)
= E(\net_1) \cup E(\net_2) \cup (S_1\times T_2) \cup (S_2 \times T_1)$.
That is, other than the edges of $\net_1$ and $\net_2$, their crisscross
has edges from the sources of $\net_1$ to the terminals of $\net_2$,
and from the sources of $\net_2$ to the terminals of $\net_2$.

We prove our next result by crisscrossing the sum-network $\spc_3$ with a complete bipartite graph.
A complete bipartite graph $K_{m,n}$ has $m$ nodes in one partition and $n$ nodes in the other.
We assume that the $m$ nodes in one partition are sources and the $n$ nodes in the other partition
are terminals. The capacity of each edge in $K_{m,n}$ is unity.
All the edges are directed from the source partition to the terminal partition. When all the terminals
in the complete bipartite graph $K_{m,n}$ want to compute the sum of symbols generated at all sources, the min-cut bound on the  
network coding capacity $\eta$ is always achievable as each source can send its symbols to 
all the terminals through direct edges. The crisscross network of $\spc_3$ and $K_{m-3,n-3}$ is shown in Fig. \ref{fig:S3+Kmn}. 

\begin{center}
\begin{figure*}[t]
\centering
\includegraphics[width=5.2in]{S3+Kmn.eps}
\caption{The network $\spc_3 \bowtie K_{m-3,n-3}$}
\label{fig:S3+Kmn}
\end{figure*}
\end{center}

Clearly, for the network $\spc_3 \bowtie K_{m-3,n-3}$, the min-cut bound $\eta = 1$ on the network coding capacity is 
not achievable. This gives the following result. 

\begin{theorem}
There exists a sum-network having $m$ sources and $n$ terminals, with $m \geq 3, n \geq 3$, where the 
min-cut bound $\eta$ on the network coding capacity is not achievable. 
\end{theorem}

Now, for a special subclass of the sum-networks with $m = 3, n \geq 3$ or $m \geq 3, n = 3$ within 
the class of the sum-networks with $m \geq 3, n \geq 3$, we prove the following result.

\begin{lemma}
\label{lowerbound}
The network coding capacity of any sum-network with $m = 3, n \geq 3$ or $m \geq 3, n = 3$ and $\eta=1$ is 
at least $2/3$.
\end{lemma}
\begin{proof}
Consider a sum-network with $m \geq 3, n = 3$. Let the sources be $s_1,s_2,\ldots,s_m$ and terminals be $t_1,t_2,t_3$. We give a rate $2/3$ fractional linear network code which gives a solution to the sum-network. Let the symbols generated at the source $s_i$ be $\mathbf{X_i}=(X_{i1},X_{i2})$ for $1\leq i \leq m$. Let $sum_{1}=\sum_{i=1}^mX_{i1}$ and $sum_{2}=\sum_{i=1}^mX_{i2}$. By using a scalar linear network code $C_1$, $sum_{1}$ can be communicated to $t_1$ and $t_2$ as it was proved in \cite{ramamoorthy} that if there are two sources or two terminals then the sum of the symbols generated at all the sources can always be communicated to all the terminals. Similarly, by using a scalar linear network code $C_2$, $sum_{2}$ can be communicated to $t_2$ and $t_3$ and by using a scalar linear network code $C_3$, $sum_{1}+sum_{2}$ can be communicated to the terminals $t_1$ and $t_3$. Clearly, by using codes $C_1$, $C_2$ and $C_3$ in three consecutive time intervals, all the ter
 minals can recover $\sum_{i=1}^mX_{i1}, \sum_{i=1}^mX_{i2}$ after doing required operations at the terminals. This network coding scheme is equivalent to rate $2/3$ fractional linear network coding. Now, by Lemma \ref{lem:lem1}, the sum of symbols generated at all the sources can also be communicated to all the terminals in a sum-network with $m = 3, n \geq 3$ at a rate $2/3$. This completes the proof.
\end{proof}

Now we prove that the network coding capacity of the sum-network $\spc_3$ is $2/3$ and, moreover, 
it is achievable by fractional linear network coding.
\begin{lemma}
\label{lem:2/3}
The network coding capacity of the sum-network $\spc_3$ is $2/3$. 
\end{lemma}
\begin{proof}
  Let the sources generate symbols from a finite field $F$ of size $q>2$. Let the component wise sum of $k$ symbols generated at all the sources are communicated to all the terminals by a rate $k/l$ fractional linear network network code. Let the symbols generated at the source $s_i$ be $\mathbf{X_i}=(X_{i1},X_{i2},\ldots, X_{ik})$ for $1\leq i \leq 3$. Let the $j$-th components of the symbols generated from the sources $s_1$, $s_2$ and $s_3$ be $X_{1j}$, $X_{2j}$ and $X_{3j}$ where $1\leq j \leq k$. Since the terminal $t_1$ has to recover $X_{1j}+X_{2j}+X_{3j}$, the information about $X_{1j}+X_{3j}$ has to pass through the edge $(u_1,v_1)$. Similarly, the information about $X_{2j}+X_{3j}$ has to pass through the edge $(u_2,v_2)$. The terminal $t_3$ has to recover the $X_{1j}+X_{2j}+X_{3j}$ using the information coming through both the edges $(u_1,v_1)$ and $(u_2,v_2)$. Clearly, if the terminal $t_3$ can recover $X_{1j}+X_{2j}+X_{3j}$ then it can also compute $X_{1j}$, $X_{2j
 }$ and $X_{3j}$ individually. Similarly, if all the terminals recovers $X_{11}+X_{21}+X_{31},X_{12}+X_{22}+X_{32},\ldots,X_{1k}+X_{2k}+X_{3k}$ then the terminal $t_3$ can recover all the symbols generated at all the three sources. This implies that $q^{2l}\geq q^{3k}$. This gives $k/l \leq 2/3$. By Lemma \ref{lowerbound}, the network coding capacity of the sum-network $\spc_3$ is at least $2/3$. Therefore the network coding capacity of the sum-network $\spc_3$ is $2/3$. 
\end{proof}

\begin{theorem}
\label{thm:2/3}
There exists a sum-network, with $m=3,n\geq3$ or $m\geq3,n=3$ and $\eta=1$, which network coding capacity is $2/3$. 
\end{theorem}

\begin{proof}
The theorem follows from Lemma \ref{lem:lem1}, Lemma \ref{lem:2/3} and the construction $\spc_3 \bowtie K_{m-3,n-3}$.
\end{proof}

\textit{Remark 2:} It would be interesting to investigate if there exists a sum-network with $m=3,n\geq3$ or $m\geq3,n=3$ whose network coding capacity is some rational value between $2/3$ and $1$.

Now we prove the following result for the sum-networks having $m \geq 3, n \geq 3$ and $\eta=1$.
\begin{theorem}
\label{thm:lower bound}
The network coding capacity of a sum-network having $m,n\geq 3$ and $\eta=1$ is at least $\frac{2}{\mbox{min}(m,n)}.$
\end{theorem}
\begin{proof}
 Consider the case when $m$ is odd. We give a rate $2/m$ fractional linear network code which gives a solution to the sum-network. Let the symbols generated at the source $s_i$ be $\mathbf{X_i}=(X_{i1},X_{i2})$ for $1\leq i \leq m$. Group all the sources in $(m-3)/2+1$ groups as $(s_1,s_2),(s_3,s_4),\ldots,(s_{m-4},s_{m-3}),(s_{m-2},s_{m-1},s_m)$. There exists a scalar linear network code which enables the sources $s_1$ and $s_2$ to communicate the sum $X_{11}+X_{21}$ in unit time. Using the same scalar linear network code for two time intervals the sources $s_1$ and $s_2$ can communicate the sum $X_{11}+X_{21},X_{12}+X_{22}$ to all the terminals. Similarly, $s_3$ and $s_4$ can also communicate the sum $X_{31}+X_{41},X_{32}+X_{42}$ in two unit time to all the terminals. Similarly, $s_{m-4}$ and $s_{m-3}$ can communicate $X_{(m-4)1}+X_{(m-3)1},X_{(m-4)2}+X_{(m-3)2}$ in two unit time to all the terminals. By Theorem \ref{thm:2/3}, sources $s_{m-2}$, $s_{m-1}$, and $s_m$ can com
 municate $X_{(m-2)1}+X_{(m-1)1}+X_{m1},X_{(m-2)2}+X_{(m-1)2}+X_{m2}$ to all the terminals in three time intervals. Clearly, using $m$ time intervals $X_{11}+X_{21}+\ldots+X_{m1},X_{12}+X_{22}+\ldots+X_{m2}$ can be communicated to all the terminals. So, the network coding capacity is at least $2/m$. 

Similarly, when $m$ is even, it can be shown that the network coding capacity is at least $2/m$. Now, by Lemma \ref{lem:lem1}, the network coding capacity of a sum-network having $m,n\geq 3$ and $\eta=1$ is at least $\frac{2}{\mbox{min}(m,n)}.$
\end{proof}

\textit{Remark 3:}  It would be interesting to investigate the tightness of the lower bound given by Theorem \ref{thm:lower bound}.

\section{Discussion}
\label{disc}
We have given the classification of the sum-networks based on the network coding capacity. 
In the first class, when a sum-network has only one terminal, the network coding capacity is
equal to the multicast capacity of the reverse sum-network. We have given the lower bounds on the 
the network coding capacity in other two classes and shown that in one special subclass of the 
sum-networks where $m=3,n\geq 3$ or $m\geq 3,n=3$, the lower bound is tight. It was shown in \cite{Raiarxiv} that when the symbols are taken from a finite field, the sum of symbols generated at all sources can be communicated to all the terminals if and only if any linear function of the symbols generated at all the sources can be communicated to the terminals. So, all the results in this paper are also true for communicating any linear function of the symbols generated at all sources. Investigation about the tightness of the lower bounds for the sum-networks when $m=2,n\geq2$ or $m\geq2,n=2$ and $m,n\geq3$ is in progress. 
}
\section{Acknowledgment}
The work of B.~K.~Rai was supported in part by Tata Teleservices IIT Bombay
Center of Excellence in Telecomm (TICET). The work of B.~K.~Dey and S.~Shenvi
was supported in part by Tata Teleservices IIT Bombay
Center of Excellence in Telecomm (TICET) and Bharti Centre for Communication.
The authors would like to thank Tony Jacob for fruitful discussions.
\bibliographystyle{unsrt}
\bibliography{ref}

\begin{thebibliography}{10}

\bibitem{ahlswede1}
R.~Ahlswede, N.~Cai, S.-Y.~R. Li, and R.~W. Yeung.
\newblock Network information flow.
\newblock {\em IEEE Trans. Inform. Theory}, 46(4):1204--1216, 2000.

\bibitem{li1}
S.-Y.~R. Li, R.~W. Yeung, and N.~Cai.
\newblock Linear network coding.
\newblock {\em IEEE Trans. Inform. Theory}, 49(2):371--381, 2003.

\bibitem{koetter2}
R.~Koetter and M.~M\'{e}dard.
\newblock An algebraic approach to network coding.
\newblock {\em IEEE/ACM Transactions on Networking}, 11(5):782--795, 2003.

\bibitem{jaggi1}
Sidharth Jaggi, P.~Sanders, P.~A. Chou, M.~Effros, S.~Egner, K.~Jain, and
  L.~Tolhuizen.
\newblock Polynomial time algorithms for multicast network code construction.
\newblock {\em IEEE Trans. Inform. Theory}, 51(6):1973--1982, 2005.

\bibitem{cannons1}
J.~Cannons, R.~Dougherty, C.~Freiling, and K.~Zeger.
\newblock Network routing capacity.
\newblock {\em IEEE Trans. Inform. Theory}, 52(3):777--788, 2006.

\newpage 

\bibitem{dougherty4}
Randall Dougherty, C.~Freiling, and Kenneth Zeger.
\newblock Unachievability of network coding capacity.
\newblock {\em IEEE Trans. Inform. Theory}, 52(6):2365--2372, 2006.

\bibitem{langberg2}
Michael Langberg and A.~Sprintson.
\newblock On the hardness of approximating the network coding capacity.
\newblock In {\em Proceedings of IEEE International Symposium on Information
  Theory}, Toronto, Canada, 2008.

\bibitem{gallager2}
R.~G. Gallager.
\newblock Finding parity in a simple broadcast network.
\newblock {\em IEEE Trans. Inform. Theory}, 34:176--180, 1988.

\bibitem{giridhar}
A.~Giridhar and P.~R. Kumar.
\newblock Computing and communicating functions over sensor networks.
\newblock {\em IEEE J. Select. Areas Commun.}, 23(4):755--764, 2005.

\bibitem{kanoria}
Y.~Kanoria and D.~Manjunath.
\newblock On distributed computation in noisy random planar networks.
\newblock In {\em Proceedings of ISIT, Nice, France}, 2008.

\bibitem{korner1}
J.~Korner and K.~Marton.
\newblock How to encode the modulo-two sum of binary sources.
\newblock {\em IEEE Trans. Inform. Theory}, 25(2):219--221, 1979.

\bibitem{han1}
T.~S. Han and K.~Kobayashi.
\newblock A dichotomy of functions $f(x,y)$ of correlated sources $(x,y)$.
\newblock {\em IEEE Trans. Inform. Theory}, 33(1):69--86, 1987.

\bibitem{feng1}
H.~Feng, M.~Effros, and S.~A. Savari.
\newblock Functional source coding for networks with receiver side information.
\newblock In {\em Proceedings of the Allerton Conference on Communication,
  Control, and Computing}, September 2004.

\bibitem{ramamoorthy}
Aditya Ramamoorthy.
\newblock Communicating the sum of sources over a network.
\newblock In {\em Proceedings of ISIT, Toronto, Canada, July 06-11}, pages
  1646--1650, 2008.

\bibitem{RaiD:09a}
Brijesh~Kumar Rai, Bikash~Kumar Dey, and Abhay Karandikar.
\newblock Some results on communicating the sum of sources over a network.
\newblock In {\em Proceedings of NetCod 2009}, 2009.

\bibitem{langberg3}
Michael Langberg and A.~Ramamoorthy.
\newblock Communicating the sum of sources in a 3-sources/3-terminals network.
\newblock In {\em Proceedings of IEEE International Symposium on Information
  Theory}, Seoul, Korea, 2009.

\bibitem{RaiD:09b}
Brijesh~Kumar Rai and Bikash~Kumar Dey.
\newblock Feasible alphabets for communicating the sum of sources over a
  network.
\newblock In {\em Proceedings of IEEE International Symposium on Information
  Theory}, Seoul, Korea, 2009.

\bibitem{RaiD:09c}
Brijesh~Kumar Rai and Bikash~Kumar Dey.
\newblock Sum-networks: system of polynomial equations, reversibility,
  insufficiency of linear network coding, unachievability of coding capacity.
\newblock {\em Submitted to IEEE Trans. Info. Theory, August 2009.}

\bibitem{dougherty2}
R.~Dougherty, C.~Freiling, and K.~Zeger.
\newblock Linear network codes and systems of polynomial equations.
\newblock {\em IEEE Trans. Inform. Theory}, 54(5):2303--2316, 2008.

\bibitem{koetter3}
R.~Koetter, M.~Effros, T.~Ho, and M.~M\'{e}dard.
\newblock Network codes as codes on graphs.
\newblock In {\em Proceedings of the 38th annual conference on information
  sciences and systems (CISS)}, 2004.

\bibitem{appuswamy1}
N.~Karamchandani R.~Appuswamy, M.~Franceschetti and K.~Zeger.
\newblock Network computing capacity for the reverse butterfly network.
\newblock In {\em Proceedings of IEEE International Symposium on Information
  Theory}, Seoul, Korea, 2009.

\bibitem{appuswamy2}
N.~Karamchandani R.~Appuswamy, M.~Franceschetti and K.~Zeger.
\newblock Network coding for computing.
\newblock In {\em Proceedings of Annual Allerton Conference}, UIUC, IIlinois,
  USA, 2008.

\end{thebibliography}
\end{document}